\documentclass[12pt,reqno]{amsart}
\pdfoutput=1 
\usepackage{amsmath,bbm}
\usepackage{latexsym}
\usepackage{amsfonts}
\usepackage{amssymb}
\usepackage{color}
\usepackage{graphicx}
\usepackage{url}
\usepackage{enumerate}
\usepackage{tikz}
\usetikzlibrary{shadings, intersections, calc, plotmarks}
\usepackage{marginnote}
\usepackage{stackrel}
\usepackage{empheq}
\usepackage{tcolorbox}
\usetikzlibrary{arrows.meta,positioning}

\usepackage{geometry}
\usepackage{hyperref}
\hypersetup{
  colorlinks=true,
  linkcolor=blue!90!black,    
  citecolor=green!50!black,   
  urlcolor=blue!70!black,     
  linktoc=all,                
}

\xdefinecolor{tumblue}     {RGB}{0,101,189}
\xdefinecolor{tumgreen}    {RGB}{162,173,  0}
\xdefinecolor{tumred}      {RGB}{229, 52, 24}
\xdefinecolor{tumivory}    {RGB}{218,215,203}
\xdefinecolor{tumorange}   {RGB}{227,114, 34}
\xdefinecolor{tumlightblue}{RGB}{152,198,234}

\newtheorem{proposition}{Proposition}
\newtheorem*{proposition*}{Proposition}

\newtheorem{theorem}{Theorem}
\newtheorem*{theorem*}{Theorem}
\newtheorem{lemma}{Lemma}
\newtheorem{corollary}{Corollary}
\newtheorem*{corollary*}{Corollary}
\newtheorem{conjecture}{Conjecture}

\newtheorem*{definition*}{Definition}

\newcommand{\spec}{{\rm{spec}}}

\newcommand{\cH}{\mathcal{H}}
\newcommand{\R}{\mathbbm{R}}

\newcommand{\cK}{\mathcal{K}}

\newcommand{\C}{\mathbbm{C}}
\newcommand{\cF}{\mathcal{F}}
\newcommand{\N}{\mathbbm{N}}

\newcommand{\F}{\mathbbm{F}}

\newcommand{\cP}{\mathcal{P}}
\newcommand{\cT}{\mathcal{T}}

\newcommand{\cA}{\mathcal{A}}
\newcommand{\cE}{\mathcal{E}}

\newcommand{\cV}{\mathcal{V}}

\newcommand{\cY}{{\mathcal Y}}

\newcommand{\1}{\mathbbm{1}}
\newcommand{\Real}{\mathfrak{Re}}

\def\>{{\rangle}}
\def\<{{\langle}}
\def\Z{{\mathbb{Z}}}
\newcommand{\be}{\begin{equation}}
\newcommand{\ee}{\end{equation}}
\newcommand{\bea}{\begin{eqnarray}}
\newcommand{\eea}{\end{eqnarray}}

\newcommand{\tr}[1]{\mathrm{tr}\left[#1\right]} 

\newcommand{\rank}[1]{\mathrm{rank}\left[#1\right]}
\newcommand{\norm}[1]{\left\lVert #1 \right\rVert}

\renewcommand{\d}[1]{\ensuremath{\operatorname{d}\!{#1}}}

\newcommand{\abs}[1]{\left\lvert #1 \right\rvert}

\newcommand\rb[1]{\left( #1 \right)}

\newcommand\dotp[1]{\left< #1 \right>}

\newcommand\suml[2]{\sum\limits_{#1}^{#2}}

\renewcommand{\vec}[1]{\left|{#1}\right>}
\newcommand{\dvec}[1]{\left<{#1}\right|}
\newcommand{\op}[2]{\vec{#1}\dvec{#2}}

\begin{document}

\title[RA conjecture \& equipartitions of positive operators]{The Ruskai-Audenaert conjecture \&\\ equipartitions of positive operators}

\author[Kumar]{Niranjan Kumar$^{1,2,3}$}
\email{niranjan.kumar@mpq.mpg.de}
\author[Wolf]{Michael M. Wolf$^{1,2}$}
\email{m.wolf@tum.de}
\address{$^1$ Department of Mathematics, Technical University of Munich}
\address{$^2$ Munich Center for Quantum
Science and Technology (MCQST),  Munich, Germany}
\address{$^3$ Max-Planck-Institute of Quantum Optics, Garching, Germany}

\date{\today}

\begin{abstract} Several open problems in quantum information theory can be formulated as equipartition problems for positive operators, asking for a decomposition into bounded-rank positive parts under uniform constraints. The existence problems for SIC-POVMs and MUBs are of this type, as is the Ruskai–Audenaert conjecture. We first show that certain problems of this form can be attacked using equivariant cohomology, and then present new results on the Ruskai–Audenaert conjecture. In its weak form, this conjecture asserts that every quantum channel admits a convex decomposition into a minimal number of generalized extreme points; in its strong form, one with equal weights. We prove the strong conjecture in all dimensions for a set of channels of nonzero measure, including all cq- and qc-channels, as well as for all channels with qubit inputs. We also prove the weak conjecture for all qutrit channels, along with further results on convex decompositions of quantum channels.
\end{abstract}
\maketitle
\section{Introduction}\label{sec:intro}
A \emph{quantum channel} mathematically describes the input-output relation of a quantum process---such as the transformation of a (possibly open) quantum system, the transmission through a communication line, or the evolution of a quantum computer. In the Schr\"odinger picture, a quantum channel is a completely positive, trace-preserving (cptp) linear map. If we fix the Hilbert space dimensions of the input and output systems to $d_1$ and $d_2$, respectively, the set of all quantum channels, which then map $\C^{d_1\times d_1}\rightarrow \C^{d_2\times d_2}$, is convex and compact. By Carath\'eodory's theorem, every element of this set can therefore be written as a convex combination of at most $D+1$ extreme points, where $D$ is the dimension of the ambient real affine space. Here, that space is the space of Hermiticity- and trace-preserving linear maps, so $D=(d_2^2-1)d_1^2$. 

The topological closure $\overline{\cE}$ of the set of extreme points $\cE$ turns out to be the set of all quantum channels whose `Kraus rank', which is the rank of the Choi matrix, is at most $d_1$. Elements of  $\overline{\cE}$ are called \emph{generalized extreme points}. Since extremality can be characterized in terms of a linear independence condition \cite{CHOICP}, the complement of $\cE$ in  $\overline{\cE}$ is of lower dimension, so that, generically, a generalized extreme point \emph{is} an extreme point.  

In \cite{Ruskai} M.B. Ruskai formulated a remarkable conjecture, based on numerical evidence provided by K. Audenaert. We refer to it as the \emph{Ruskai-Audenaert  conjecture} (RA) and distinguish a weak and strong form of RA: in its weak form, RA asserts that every quantum channel admits a convex decomposition into only $d_2$ generalized extreme points. The strong form states that additionally the corresponding weights are all equal. For instance, for $d_1=d_2=3$, RA yields a convex decomposition into $3$ terms (with equal weights in the strong form), whereas Carath\'eodory merely guarantees one into $73$ terms with different weights.\vspace{5pt}

In this paper, we will approach the Ruskai-Audenaert conjecture from three different directions: (i) via a larger class of equipartition problems of positive operators that can be addressed using equivariant topology (Sec.\ref{sec:balanced}), (ii) via a general analysis of convex decompositions of quantum channels and their structural robustness (Sec.\ref{sec:convex}), and (iii) via explicit classes of channels for which strong RA can be proven directly (Sec.\ref{sec:strongRAexp}).\vspace*{5pt}

More specifically:

\begin{itemize}
 \item[$\circ$] In Sec.~\ref{sec:balanced} (and Appendices~\ref{sec:SICPOVM}, \ref{sec:MUB}) we show that the existence problems for SIC-POVMs and MUBs, as well as the RA conjectures, can be phrased as equipartition problems for positive operators. Thm.~\ref{thm:equivP} then provides a general tool, based on equivariant cohomology, for solving some problems of this kind. Its corollaries are a stronger RA decomposition for $d_1=d_2=2$ (Cor.~\ref{cor:strongerRA}), a proof of the weak RA conjecture for $d_1=d_2=3$ (Cor.~\ref{cor:weak3RA}), a stronger Schur--Horn decomposition for density operators (Cor.~\ref{cor:SHext}) and one with similar marginals (Cor.~\ref{cor:marg}), and the existence of orthonormal bases with balanced prescribed properties (Prop.~\ref{prop:balONB}).\vspace*{5pt}
\item[$\circ$] In Sec.~\ref{sec:convex} we focus on convex decompositions of quantum channels. Thm.~\ref{thm:robustness} provides a condition for the structural robustness of such a decomposition in terms of the connectedness of a graph assigned to it. Exploiting this, Thm.~\ref{thm:RAstructure} then shows that the set of channels for which the strong RA conjecture holds is closed, connected, semialgebraic and has nonempty interior, i.e., nonzero measure in every dimension. After stating an improved upper bound on the Carath\'eodory number (Thm.~\ref{thm:Cnr}), Thm.~\ref{thm:bary} shows that barycentric decompositions of quantum channels into generalized extreme points always exist, whereas such decompositions into extreme points need not exist if $d_2\geq 3$.\vspace*{5pt}
\item Sec.~\ref{sec:strongRAexp} provides explicit families of quantum channels with a proof of the strong RA conjecture. Thm.~\ref{thm:strong2RA} establishes this for any $d_2$ if $d_1=2$, and Thm.~\ref{thm:cqqc} proves the strong RA conjecture for all qc- and cq-channels.
\end{itemize}

\newpage
\section{Equipartitions of positive operators}\label{sec:balanced}

\noindent Our template for an `equipartition problem' of a positive operator is the following:

\begin{quote}
\begin{tcolorbox}[
  colback=gray!10,    
  colframe=gray!10,   
  boxrule=0pt,        
  left=.51em, right=.51em, top=.61em, bottom=0.2em
]
Given a positive operator $P$ on a complex Hilbert space, and $n,r\in\N$, is there a decomposition $$P=\sum_{i=1}^n P_i$$ into positive operators $(P_i)_{i=1}^n$, each of rank at most $r$ and  satisfying $f(P_1)=\cdots=f(P_n)=c$ for a given continuous map $f$?
\end{tcolorbox}
\end{quote}
Note that for a linear constraint function $f$, the constant $c$ is already determined by $f(P_i)=\tfrac1n f(P)$.

A number of open problems in quantum information theory can be expressed as equipartition problems of positive operators. These include notorious ones, such as the existence problems of \emph{symmetric informationally complete POVMs} (SIC-POVMs) and of \emph{mutually unbiased bases} (MUBs) as well as the \emph{Ruskai-Audenaert decomposition} problem. 

We begin by formulating the existence problems for  SIC-POVMs and MUBs as equipartition problems of positive operators. References and proofs for these cases are included in Appendices \ref{sec:SICPOVM} and \ref{sec:MUB}.

\begin{proposition}[SIC-POVMs]\label{prop:SIC} Let $P_{sym}$ be the Hermitian projector onto the symmetric subspace of $\C^d\otimes\C^d$. There exists a SIC-POVM in $\C^d$ iff there are positive rank-one operators $P_1,\ldots,P_{d^2}$ s.t. $\frac{2d}{d+1}P_{sym}=\sum_{i} P_i$ and for all $i$: $\tr{{\rm tr}_1[P_i]^2}=\tr{P_i}^2$.  
\end{proposition}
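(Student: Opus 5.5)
The plan is to first decode the two conditions on the $P_i$ and then match them with the standard characterisation of SIC-POVMs as minimal complex projective $2$-designs. Throughout, a SIC-POVM is a family of unit vectors $\varphi_1,\dots,\varphi_{d^2}\in\C^d$ with $|\langle\varphi_i,\varphi_j\rangle|^2=\frac1{d+1}$ for $i\neq j$ (the POVM elements being $\frac1d|\varphi_i\rangle\langle\varphi_i|$).

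\textbf{Reduction to product vectors.} Write $P_i=|\psi_i\rangle\langle\psi_i|$. Then $\tr{P_i}^2=\|\psi_i\|^4$, and $\tr{{\rm tr}_1[P_i]^2}=\|\psi_i\|^4$ holds iff the reduced state of $\psi_i/\|\psi_i\|$ is pure, i.e. iff $\psi_i=a_i\otimes b_i$ is a product vector (Schmidt decomposition).

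Since $0\le P_i\le \frac{2d}{d+1}P_{sym}$, the range of $P_i$ lies in the symmetric subspace. A symmetric product vector $a\otimes b=b\otimes a$ forces $b\propto a$. Hence $P_i=|v_i\rangle\langle v_i|^{\otimes 2}$ for some $v_i\in\C^d$. Conversely, every such operator satisfies both conditions. So the proposition is equivalent to the following statement: a SIC exists iff there are $v_1,\dots,v_{d^2}$ with
\be
\sum_{i=1}^{d^2}|v_i\rangle\langle v_i|^{\otimes 2}=\tfrac{2d}{d+1}P_{sym}.\label{eq:planSIC}
\ee

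\textbf{SIC $\Rightarrow$ \eqref{eq:planSIC}.} This is the known $2$-design property of SICs. Let $S:=\sum_i|\varphi_i\rangle\langle\varphi_i|^{\otimes 2}$. It is supported on the symmetric subspace and commutes with $U\otimes U$ for the relevant unitaries. Instead of using that symmetry, I would verify directly that $S$ and $cP_{sym}$ agree on the spanning set of operators $A\otimes B$, via the frame identity. A cleaner route is to check $\tr{(S-cP_{sym})^2}=0$. This uses $\tr{S^2}=\sum_{i,j}|\langle\varphi_i,\varphi_j\rangle|^4=d^2+\frac{d^2(d^2-1)}{(d+1)^2}$, $\tr{SP_{sym}}=\tr{S}=d^2$ and $\tr{P_{sym}}=\frac{d(d+1)}2$. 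With $c=\frac{2d}{d+1}$ these give zero. So setting $v_i=\varphi_i$ works.

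\textbf{\eqref{eq:planSIC} $\Rightarrow$ SIC.} This is the main step, because the norms of the $v_i$ and their overlaps are a priori unconstrained. Set $w_i=\|v_i\|^4$ and $\Pi_i=|\varphi_i\rangle\langle\varphi_i|$ with $\varphi_i=v_i/\|v_i\|$.

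\begin{enumerate}[(i)]
\item Using $P_{sym}=\frac12(\1+F)$, with $F$ the flip, pair \eqref{eq:planSIC} with $A\otimes B$ for Hermitian $A,B$. This shows that the frame operator on the real $d^2$-dimensional space of Hermitian matrices is
\[
\sum_i w_i\,|\Pi_i)(\Pi_i|=\kappa\,(\mathrm{id}+|\1)(\1|),\qquad \kappa=\tfrac{d}{d+1}.
\]
This operator is invertible.
\item Since there are exactly $d^2$ terms, invertibility forces the $\Pi_i$ to form a basis.
\item Taking the partial trace of \eqref{eq:planSIC} gives $\sum_i w_i\Pi_i=d\,\1$.
\item Apply the frame identity to $\Pi_j$ and replace $\1$ by $\frac1d\sum_i w_i\Pi_i$. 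This yields
\[
\sum_i w_i|\langle\varphi_i,\varphi_j\rangle|^2\Pi_i=\kappa\,\Pi_j+\tfrac{\kappa}{d}\sum_i w_i\Pi_i.
\]
\item Compare coefficients in the basis $\{\Pi_i\}$. The diagonal coefficient gives $w_j=\kappa(1+w_j/d)$, so every $w_j=1$; this is consistent with the trace condition $\sum_i w_i=d^2$. The off-diagonal coefficients give $|\langle\varphi_i,\varphi_j\rangle|^2=\kappa/d=\frac1{d+1}$.
\end{enumerate}

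Hence the $\varphi_i$ form a SIC. The only care needed is in the bookkeeping of the constants in step (i).
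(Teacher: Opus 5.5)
Your proof is correct. Its first half is the same as the paper's: the paper also uses that purity of ${\rm tr}_1[P_i]$ forces $P_i$ to be a product, and that support in the symmetric subspace then gives $P_i=|\psi_i\rangle\langle\psi_i|\otimes|\psi_i\rangle\langle\psi_i|$.

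After that the routes diverge. The paper stops at this point and identifies the remaining condition with a minimal $2$-design, citing Scott's theorem that SIC-POVMs are exactly the minimal $2$-designs. You instead prove both directions of that characterisation yourself:
\begin{enumerate}
\item For SIC $\Rightarrow$ design, you use the Hilbert--Schmidt computation $\tr{(S-\tfrac{2d}{d+1}P_{sym})^2}=0$.
\item For the converse, you use the frame-operator identity $\sum_i w_i|\Pi_i)(\Pi_i|=\kappa(\mathrm{id}+|\1)(\1|)$. Its invertibility makes the $d^2$ projectors $\Pi_i$ a basis. Comparing coefficients then fixes both $w_j=1$ and the overlaps $\tfrac1{d+1}$.
\end{enumerate}

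The paper's version is shorter because it relies on the literature. Yours is self-contained, uses only elementary linear algebra, and makes explicit a point the paper leaves to the cited result. The $P_i$ in the proposition are arbitrary rank-one positive operators, so the $\psi_i$ need not be unit vectors. The unit-vector definition of a $2$-design quoted in the appendix therefore does not apply directly. One needs either the weighted form of the minimal-design characterisation or an argument like your diagonal coefficient comparison to conclude that all weights equal one.
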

\begin{proposition}[MUBs]\label{prop:MUB} Let $\omega$ be the projector onto a maximally entangled state in $\C^d\otimes\C^d$ and $P:=\1-\omega$.
    There are $(d+1)$ MUBs in $\C^d$ iff there are positive operators $(P_i)_{i=1}^{d+1}$ of rank $(d-1)$ s.t. $P=\sum_{i} P_i$ with partial traces ${\rm tr}_1[P_i]={\rm tr}_2[P_i]\propto \1$ and partial transposes s.t. $\|(P_i+\omega)^{T_1}\|_1=d$. 
\end{proposition}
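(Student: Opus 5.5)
The plan is to prove both directions through one dictionary. To an orthonormal basis $B=\{e_k\}_{k=1}^d$ of $\C^d$ we assign the rank-$d$ projector
$$Q_B:=\sum_{k=1}^d |e_k\otimes\bar e_k\rangle\langle e_k\otimes \bar e_k|.$$
Its range always contains $|\Omega\rangle=\tfrac1{\sqrt d}\sum_k e_k\otimes\bar e_k$, and this vector is independent of the choice of basis. Hence $Q_B\ge\omega$, and $P_B:=Q_B-\omega$ is a rank-$(d-1)$ projector orthogonal to $\omega$.

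\emph{Forward direction.} Given $d+1$ MUBs $B_1,\dots,B_{d+1}$, set $P_i:=P_{B_i}$ and check the required properties.
\begin{itemize}
\item[(a)] Partial traces: ${\rm tr}_1[P_i]={\rm tr}_2[P_i]=(1-\tfrac1d)\1$, since ${\rm tr}_{1}[Q_{B_i}]={\rm tr}_2[Q_{B_i}]=\1$ and the marginals of $\omega$ are $\1/d$.
\item[(b)] Trace norm: $(P_i+\omega)^{T_1}=Q_{B_i}^{T_1}=\sum_k|\bar e_k\otimes\bar e_k\rangle\langle\bar e_k\otimes\bar e_k|$. This operator is positive with trace $d$, so its trace norm is $d$.
\item[(c)] Sum: the $d(d+1)$ vectors of a complete set of MUBs form a complex projective $2$-design. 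Therefore
$$\sum_{i,k}|e^{(i)}_k\rangle\langle e^{(i)}_k|^{\otimes 2}=2P_{sym}=\1+\F,$$
where $\F$ is the swap. Taking the partial transpose on the second factor, and using $\F^{T_2}=d\,\omega$, gives $\sum_iQ_{B_i}=\1+d\omega$. Hence $\sum_iP_i=\1+d\omega-(d+1)\omega=\1-\omega$.
\end{itemize}

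\emph{Converse direction.} Let $P_i$ satisfy the hypotheses and set $Q_i:=P_i+\omega$.

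First, $0\le P_i\le \1-\omega$ forces $P_i\omega=0$, so $Q_i\ge 0$ has rank $d$ and $\Omega\in{\rm supp}\,Q_i$. The common-marginal condition gives ${\rm tr}_1 Q_i={\rm tr}_2Q_i=a_i\1$.

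The key step is a rigidity statement: if $Q\ge0$ has rank $d$, marginals $a\1$, and $\|Q^{T_1}\|_1=d$, then $Q=Q_B$ for some orthonormal basis $B$. To prove it, write $Q=\sum_j|v_j\rangle\langle v_j|$ and identify each $v_j$ with a $d\times d$ matrix $V_j$. Then
$$\|Q^{T_1}\|_1\le\sum_j\big\|(|v_j\rangle\langle v_j|)^{T_1}\big\|_1=\sum_j\|V_j\|_1^2 .$$
We combine this with the lower bound $\|Q^{T_1}\|_1\ge \tr{Q}=da$ and with the constraint $\sum_iQ_i=\1+d\omega$ (which also fixes $\sum_i a_i$). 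The goal is to show that the extremal case forces $Q^{T_1}\ge0$ and $\tr{Q}=d$. Positivity of both $Q$ and $Q^{T_1}$ with rank $d$ and flat marginals should then yield a decomposition into $d$ orthogonal product vectors $e_k\otimes\bar f_k$. Containment of $\Omega$ in the range then forces $f_k=e_k$.

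Once every $Q_i=Q_{B_i}$, summing gives $\sum_{i,k}|e^{(i)}_k\rangle\langle e^{(i)}_k|^{\otimes2}=\1+\F$ after undoing the partial transpose. The frame potential of these $d(d+1)$ unit vectors therefore attains its Welch lower bound. The equality case gives $|\langle e^{(i)}_k,e^{(j)}_l\rangle|^2=1/d$ for $i\ne j$, i.e.\ the bases are mutually unbiased.

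I expect the rigidity step to be the main obstacle. Showing that the trace-norm condition pins down exactly the product form $Q_B$, and in particular excludes entangled rank-$d$ supports with the right marginals, requires a careful equality-case analysis of the triangle and Schmidt-norm inequalities above. I would first try to prove $Q^{T_1}\ge0$ directly from $\|Q^{T_1}\|_1=d=\tr{Q^{T_1}}$, after establishing $\tr{Q_i}=d$ from the sum constraint together with rank considerations.
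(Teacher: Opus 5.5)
Everything up to $Q_i^{T_1}\ge 0$ is fine. The forward direction via the $2$-design property of complete MUB sets is correct; the paper instead uses the elementary fact that $R_UR_V=0$ for unbiased bases. In the converse, $d=\|Q_i^{T_1}\|_1\ge\tr{Q_i}=da_i$ together with $\sum_i a_i=d+1$ (take ${\rm tr}_2$ of $\sum_iQ_i=\1+d\omega$) does force $a_i=1$, and hence $Q_i^{T_1}\ge 0$. This is a valid substitute for the paper's eigenvalue count.

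The gap is the rigidity step, which is the real content of the converse. At that point the trace-norm hypothesis is used up. What remains is the claim that a positive operator on $\C^d\otimes\C^d$ of rank $d$, with flat marginals and positive partial transpose, is a sum of $d$ product vectors. Your proposed equality-case analysis cannot prove this, because the triangle inequality $\|Q^{T_1}\|_1\le\sum_j\|V_j\|_1^2$ is never forced to be tight. You only know $\|Q^{T_1}\|_1=d=\sum_j\|V_j\|_2^2\le\sum_j\|V_j\|_1^2$. Tightness of that last inequality, i.e.\ every $V_j$ of rank one, is exactly the conclusion you want, so the argument is circular. Moreover, your norm estimates never use the rank hypothesis, while low-rank separability is sharp: PPT entangled states of rank $d+1$ exist (e.g.\ rank $4$ in $\C^3\otimes\C^3$, from unextendible product bases). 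So the rank bound must enter in an essential way.

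The missing ingredient is the low-rank separability theorem of Horodecki, Lewenstein, Vidal and Cirac \cite{PPT}, which the paper invokes. It says that a PPT operator supported on all of $\C^d\otimes\C^d$ (here guaranteed because the marginals are $\1$) with rank at most $d$ is separable. More precisely, it equals $\sum_{k=1}^{d}|a_k\otimes b_k\rangle\langle a_k\otimes b_k|$ with exactly rank-many terms. From there your plan closes:
\begin{itemize}
\item With only $d$ terms, ${\rm tr}_2Q=\sum_k\|b_k\|^2|a_k\rangle\langle a_k|=\1$ and ${\rm tr}_1Q=\1$ force $\{\|b_k\|a_k\}$ and $\{\|a_k\|b_k\}$ to be orthonormal bases.
\item $Q\Omega=\Omega$ then gives $b_k=\overline{a}_k$ up to phases.
\end{itemize}

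One smaller point concerns your final step. Equality in the Welch bound by itself only says the $d(d+1)$ vectors form a $2$-design. To get $|\langle e^{(i)}_k,e^{(j)}_l\rangle|^2=1/d$ you still need a per-vector argument: for each fixed $e^{(i)}_k$, the $d^2$ cross-basis overlaps sum to $d$ and their squares sum to $1$, and Cauchy--Schwarz then forces them all to equal $1/d$. Alternatively, as in the paper, observe that the $P_i$ are rank-$(d-1)$ projectors summing to the projector $\1-\omega$. They are therefore mutually orthogonal, which is exactly unbiasedness.
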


Now we turn towards our main target, formulated as  equipartition problems of positive operators:

\begin{conjecture}[Ruskai-Audenaert]\label{conj:RA1}
    Let $P\in {\rm Herm}(\C^{d_1}\otimes \C^{d_2})$ be the Choi matrix of a completely positive trace-preserving map (cptp) $T:\C^{d_1\times d_1}\rightarrow \C^{d_2\times d_2}$. There are $n=d_2$ positive operators $P_i$ of rank at most $r= d_1$ s.t. $P=\sum_{i=1}^n P_i $ and either
    \begin{itemize}
        \item ${\rm tr}_2[P_i]\propto\1$ for all $i$, or \hfill (weak RA conjecture)
        \item ${\rm tr}_2[P_i]={\rm tr}_2[P_j]$ for all $i,j$.\hfill (strong RA conjecture)
    \end{itemize}
\end{conjecture}
The \emph{weak RA conjecture} asserts that within the set of cptp maps,  every $T$ can be convexly decomposed into $d_2$ generalized extreme points. Note that repeated use of generalized extreme points is allowed. Moreover, $d_2$ terms in the decomposition is the absolute minimum since a Choi matrix of rank $d_1 d_2$ cannot be a sum of fewer than $d_2$ matrices of rank $d_1$ or less.

The \emph{strong RA conjecture} states that, in addition to the weak conjecture, the weights in the convex decomposition can be chosen to be all equal, i.e., the decomposition is \emph{barycentric}.\vspace{5pt} 

Another example that has the structure of the template problem is one that can be proven using the Schur-Horn theorem (cf. \cite{Ruskai}). It can be regarded as the $d_1=1$ case of the strong RA conjecture:
\begin{proposition}[Schur-Horn decomposition]\label{prop:SchurHorn} For any density operator $\rho$ acting on $\C^n$, there are $\varphi_i\in\C^n$ with $\|\varphi_i\|=1$ s.t.
    \begin{equation}
        \rho=\frac1n\sum_{i=1}^n |\varphi_i\rangle\langle\varphi_i|.
    \end{equation}
\end{proposition}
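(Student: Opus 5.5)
The plan is to reduce the statement to a matrix-theoretic fact. The decomposition $\rho=\frac1n\sum_i|\varphi_i\rangle\langle\varphi_i|$ is equivalent to finding a matrix $V\in\C^{n\times n}$ with $VV^*=n\rho$ whose columns $\varphi_i$ all have unit norm. Equivalently, we need the Gram matrix $G=V^*V$ to have all diagonal entries equal to $1$. Since $V^*V$ and $VV^*$ have the same spectrum, this amounts to finding a positive matrix $G$ with eigenvalues $n\lambda_1,\dots,n\lambda_n$ (where $\lambda_j$ are the eigenvalues of $\rho$) and diagonal $(1,\dots,1)$.

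The Schur--Horn theorem settles the existence of such a $G$. A Hermitian matrix with spectrum $\mu=(n\lambda_j)$ and diagonal $\delta$ exists iff $\delta$ is majorized by $\mu$. The constant vector $(1,\dots,1)$ has the same total sum $n\,\tr{\rho}=n$ as $\mu$, and the uniform vector is majorized by every vector with the same sum. One can check this quickly: the sum of the $k$ largest entries of $\mu$ is at least $k/n$ times the total, i.e. at least $k$. So Schur--Horn yields a unitary $U$ with $G=U^*\,\mathrm{diag}(n\lambda)\,U$ having unit diagonal.

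It remains to build $V$ explicitly. Write $\rho=W\,\mathrm{diag}(\lambda)W^*$ with $W$ unitary, and set $V:=\sqrt n\,W\,\mathrm{diag}(\sqrt{\lambda})\,U$. Then $VV^*=n\rho$ and $V^*V=G$, so the columns satisfy $\|\varphi_i\|^2=G_{ii}=1$. Hence $\rho=\frac1n VV^*=\frac1n\sum_i|\varphi_i\rangle\langle\varphi_i|$.

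The only nontrivial input is the converse (existence) direction of Schur--Horn, which I take as known. If a self-contained argument is preferred, I would instead prove it directly by induction on $n$. Choose a unit vector $\varphi_1$ with $\langle\varphi_1,\rho^{-1}\ldots\rangle$, or more simply use an intermediate-value argument: there is a unit vector $\psi$ with $\langle\psi,\rho\psi\rangle=1/n$, since the range of this quadratic form on the unit sphere is $[\lambda_{\min},\lambda_{\max}]\ni 1/n$. One then splits off a rank-one piece and recurses on the remaining $(n-1)$-dimensional problem. The care needed in keeping the remainder positive with the right trace is the main obstacle in that route, which is why I would rely on Schur--Horn.
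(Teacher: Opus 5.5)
Your proof is correct and takes the route the paper indicates: the paper gives no separate argument and attributes the statement to the Schur--Horn theorem. Your reduction to a unit-diagonal Gram matrix $G=V^*V$ with spectrum $n\lambda$, the check $(1,\dots,1)\prec n\lambda$, and the explicit $V=\sqrt n\,W\,\mathrm{diag}(\sqrt\lambda)\,U$ are the standard way to carry this out. The unfinished sentence about an inductive alternative in your last paragraph is not needed and can be dropped; if you want that route for invertible $\rho$, pick a unit $\varphi$ with $\langle\varphi,\rho^{-1}\varphi\rangle=n$, which makes $\rho-\frac1n|\varphi\rangle\langle\varphi|$ positive of rank $n-1$, exactly as the recursion requires.
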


\subsection*{Equipartitions from equivariant cohomology}
A simple example that shows that and how topological tools can be helpful in proving the existence of an  equipartition, is the following: 
\subsection*{A toy example} Consider $\cH=\C^2$, $r=1$, $P>0$, and a continuous map $f:{\rm Herm}(\cH)\rightarrow\R^2$. Our goal is to show that there are always two positive rank-one operators $P_1, P_2$ s.t. $P=P_1+P_2$ and $f(P_1)=f(P_2)$. 

To this end, note that the space of all decompositions of $P$ into two positive rank-one operators is homeomorphic to the sphere $S^2$. This can be seen by writing $P_i=\sqrt{P}Q_i\sqrt{P}$ and realizing that $Q_1$ and $Q_2$ must be mutually orthogonal projectors, which in turn correspond to antipodal points on the \emph{Bloch sphere} $S^2$. So write $P_x,P_{-x}$ with $x\in S^2$ instead of $P_1, P_2$. Now define $\psi:S^2\rightarrow \R^2$, $\psi(x):=f(P_x)-f(P_{-x})$. As this is an \emph{odd} map, in the sense that $\psi(-x)=-\psi(x)$, the \emph{Borsuk-Ulam} theorem guarantees the existence of a zero and thus of an $x$ s.t. $f(P_x)=f(P_{-x})$, while $P=P_x+P_{-x}$. \vspace*{6pt}   

Now we will extend the idea behind this toy-example to more general scenarios by employing standard ideas from equivariant topology (cf. \cite{Equivariant}, \cite{Jelic}).
Denote by $\cF_r(\C^{rn})$ the set of all $n$-tuples $(Q_1,\ldots,Q_n)$ of mutually orthogonal projectors $Q_i\in {\rm Herm}(\C^{rn})$ of rank $r$. Note the homeomorphism $\cF_r(\C^{rn})\simeq U(rn)/U(r)^n$ to a \emph{flag manifold}  and that $\sum_i Q_i=\1$. We can regard $\cF_r(\C^{rn})$ as equipped with an action of the group $S_n$ of permutations of $n$ elements,  simply by  permuting the $n$ entries of each $n$-tuple. That is, $\sigma\cdot (Q_1,\ldots,Q_n):=(Q_{\sigma^{-1}(1)},\ldots,Q_{\sigma^{-1}(n)})$ for any $\sigma\in S_n$.

We also define a \emph{test space} $W_n(\R^k):=\{x\in(\R^k)^n | x_1+\ldots +x_n=0\}$, again equipped with the natural action of $S_n$, and its unit sphere $S\big(W_n(\R^k)\big)$ with respect to the Euclidean norm in $\R^{kn}$.

A continuous map $\Psi:\cF_r(\C^{rn})\rightarrow S\big(W_n(\R^k)\big)$ is called $G$-\emph{equivariant} for any subgroup $G\subseteq S_n$ if $\sigma\circ\Psi=\Psi\circ\sigma$ for all $\sigma \in G$. 

\begin{lemma}\label{lem:balancedequi}
    Given an operator $P\geq 0$ on a complex finite-dimensional Hilbert space $\cH$  and a continuous map $f:{\rm Herm}(\cH)\rightarrow \R^k$. Let $n,r\in\N$ be such that $\rank{P}\leq rn$. If there is no $S_n$-equivariant map $\cF_r(\C^{rn})\rightarrow S\big(W_n(\R^k)\big)$, then there are positive operators $(P_i)_{i=1}^n$ of rank at most $r$ s.t.  $P=\sum_{i=1}^n P_i$ and $f(P_1)=\cdots=f(P_n)$. 
\end{lemma}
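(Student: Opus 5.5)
We argue by contraposition, generalizing the toy example. Assume no decomposition of $P$ with the required properties exists; we will construct an $S_n$-equivariant map $\cF_r(\C^{rn})\rightarrow S\big(W_n(\R^k)\big)$, contradicting the hypothesis.

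\textbf{Step 1: a parametrization of decompositions.} Let $m:=\rank{P}\leq rn$ and let $\cK:={\rm supp}(P)$, so $\dim\cK=m$. Fix an isometry $V:\cK\rightarrow\C^{rn}$, which embeds $\cK$ as a subspace of $\C^{rn}$, and write $\Pi:=VV^*$ for the projector onto its image. For $(Q_1,\ldots,Q_n)\in\cF_r(\C^{rn})$ we set
\be
P_i:=\sqrt{P}\,V^*Q_iV\sqrt{P},
\ee
where $\sqrt{P}$ is viewed as a map $\cH\rightarrow\cK$ and $V^*$ maps $\C^{rn}\rightarrow\cK$. Each $P_i$ is positive and has rank at most $\rank{Q_i}=r$. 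Moreover,
\be
\sum_i P_i=\sqrt{P}\,V^*\Big(\sum_i Q_i\Big)V\sqrt{P}=\sqrt{P}\,V^*V\sqrt{P}=P,
\ee
using $\sum_i Q_i=\1$ and $V^*V=\1_\cK$. The assignment $(Q_i)_i\mapsto(P_i)_i$ is continuous, and it commutes with the permutation action, since permuting the $Q_i$ permutes the $P_i$ in the same way. This is the only place where the condition $\rank{P}\leq rn$ enters. When $m<rn$ the map $V$ is a proper embedding and the compression $V^*Q_iV$ can lower ranks, but that is harmless because we only need rank at most $r$.

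\textbf{Step 2: the test map.} Define $\Phi:\cF_r(\C^{rn})\rightarrow(\R^k)^n$ by
\be
\Phi(Q)_i:=f(P_i)-\frac1n\sum_{j=1}^n f(P_j).
\ee
The components sum to zero, so $\Phi$ takes values in $W_n(\R^k)$. It is continuous, and it is $S_n$-equivariant because the averaged term is permutation invariant.

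\textbf{Step 3: normalization.} $\Phi(Q)=0$ holds exactly when $f(P_1)=\cdots=f(P_n)$. By our assumption this never happens, so $\Phi$ has no zeros. Then $\Psi:=\Phi/\norm{\Phi}$ is a continuous $S_n$-equivariant map into $S\big(W_n(\R^k)\big)$; equivariance survives because the Euclidean norm on $\R^{kn}$ is invariant under permuting blocks. This contradicts the hypothesis, so the desired decomposition exists.

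Step 1 is the only point requiring care. One must check that rank at most $r$ and the sum identity survive when $P$ is not of full rank, and the embedding via $V$ handles this. The remaining steps are the standard configuration-space/test-map scheme.
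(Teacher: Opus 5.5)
Your proposal is correct and follows essentially the same route as the paper. The paper also uses $P_i=\sqrt{P}V^*Q_iV\sqrt{P}$, with $V$ isometric on ${\rm supp}(P)$ (extended by zero on its complement, which is equivalent to your viewing $\sqrt{P}$ as a map into $\cK$), then subtracts the mean of the $f(P_i)$ and normalizes to obtain the $S_n$-equivariant map into $S\big(W_n(\R^k)\big)$.
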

\emph{Remark: } Note that we could replace $S_n$ with any subgroup $G\subseteq S_n$ since every $S_n$-equivariant map is in particular $G$-equivariant. However, this also means that $G=S_n$ gives the logically strongest implication. 
\begin{proof} 
We will construct an $S_n$-equivariant map under the assumption that the described decomposition of $P$ does not exist.

Denote by $\cH_P:={\rm supp}(P)$ the support space of $P$ and decompose $\cH=\cH_P\oplus\cH_P^\perp$. Since $\dim(\cH_P)\leq rn$, we can define a linear map $V:\cH\rightarrow\C^{rn}$ that acts isometrically on $\cH_P$ and maps $\cH_P^\perp$ to $\{0\}$.

    For any $Q\in\cF_r(\C^{rn})$, define $P_i:=\sqrt{P}V^*Q_iV\sqrt{P}$. Since $P$ and $Q_i$ are positive, each $P_i$ is positive, of rank at most $r$, and $\sum_i P_i =P$. Define $\psi(Q):=\big(f(P_1)-\overline{f},\ldots,f(P_n)-\overline{f}\big)$, where $\overline{f}:=\tfrac{1}{n}\sum_i f(P_i)$. By construction, $\psi(Q)\in \big(W_n(\R^k)\big)$ and $\psi(Q)=0$ iff $f(P_1)=\cdots =f(P_n)$. Assuming that a decomposition with this property does not exist, we can define a $S_n$-equivariant map $\Psi:\cF_r(\C^{rn})\rightarrow S\big(W_n(\R^k)\big)$ via $\Psi(Q):=\psi(Q)/\|\psi(Q)\|$.
  
\end{proof}

A standard tool for disproving the existence of a $G$-equivariant map $\Psi:X\rightarrow Y$ between two topological spaces with a $G$-action is the \emph{Fadell-Husseini index} \cite{FHindex}, which we will now briefly review. Suppose a $G$-equivariant $\Psi$ exists, and denote by $EG$ a contractible space on which $G$ acts \emph{freely}\footnote{$G$ acts \emph{freely} on $X$ if for every $g\in G$ that is not the identity and any $x\in X: g\cdot x\neq x$. In \cite{Milnor56} Milnor provided a construction of a contractible space $EG$ with free group action for any topological group.}. Then one can construct a commuting diagram (left part of the subsequent figure) using the \emph{Borel construction} $X_G:=(EG\times X)/G$, where $\pi_X:[e,x]\mapsto [e]$, $X_G\rightarrow Y_G: [e,x]\mapsto [e,\Psi(x)]$, and all arrows correspond to $G-$equivariant maps:

\[
\begin{tikzpicture}[>=Stealth, baseline=(d2.base)]
\hspace*{-5pt}

  \node (b1) at (4.5,0) {$\bigl(EG\times X\bigr)/G$};
  \node (b2) at (9.5,0) {$\bigl(EG\times Y\bigr)/G$};
  \node (b3) at (7.0,-1.8) {$EG/G$};
  \draw[->] (b1) -- (b2);
  \draw[->] (b1) -- node[left] {$\pi_X$} (b3);
  \draw[->] (b2) -- node[right] {$\pi_Y$} (b3);

  \node (c1) at (12.3,0) {$H^{*}(X_G;R)$};
  \node (c2) at (16.5,0) {$H^{*}(Y_G;R)$};
  \node (d2) at (14.4,-1.8) {$H^{*}(EG/G;R)$};
  \draw[->] (c2) -- (c1);
  \draw[->] (d2) -- node[left] {$\pi_X^{*}$} (c1);
  \draw[->] (d2) -- node[right] {$\pi_Y^{*}$} (c2);
\end{tikzpicture}
\]
Applying the contravariant functor of singular cohomology with coefficients in a commutative ring $R$ with unit to the diagram on the left, one obtains the one on the right, where the arrows are reversed. Commutativity of the diagram then implies that 
\begin{equation}\label{eq:FHinclusion}
    \ker\big(\pi^*_Y\big)\ \subseteq\ \ker\big(\pi^*_X\big). 
\end{equation}
These kernels, which are ideals in the cohomology ring $H^*(EG/G;R)$, are called the \emph{Fadell-Husseini indices} of the spaces $Y$ and $X$. The non-existence of an equivariant map can thus be proven by showing that the inclusion in Eq.(\ref{eq:FHinclusion}) fails. Following Lemma \ref{lem:balancedequi}, this yields a clear path towards proving the existence of an equipartition, which results in the following:

\begin{theorem}\label{thm:equivP}
    Given an operator $P\geq 0$ on a finite-dimensional complex Hilbert space $\cH$ and a continuous map $f:{\rm Herm}(\cH)\rightarrow \R^k$. Let $n\in\N$ be prime and $r\in\N$ s.t. $\rank{P}\leq rn$. There are positive operators $(P_i)_{i=1}^n$ of rank at most $r$ such that \begin{equation}
    P=\sum_{i=1}^n P_i\qquad\text{and}\qquad f(P_1)=\cdots=f(P_n), 
\end{equation}
    if  $r=n^a q$ for some $a\in\N_0, q\in\N$ where $n$ and $q$ are coprime, and 
    \begin{equation}\label{eq:kineq}
        k\leq\frac{2(n^{a+1}-1)}{n-1}.
    \end{equation}
\end{theorem}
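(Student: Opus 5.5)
The plan is to combine Lemma~\ref{lem:balancedequi} with the Fadell--Husseini index, but only for the cyclic subgroup $G=\Z/n\subseteq S_n$, generated by the cyclic shift $(Q_1,\ldots,Q_n)\mapsto(Q_n,Q_1,\ldots,Q_{n-1})$. By the remark after Lemma~\ref{lem:balancedequi}, it suffices to show that there is no $\Z/n$-equivariant map $\cF_r(\C^{rn})\rightarrow S\big(W_n(\R^k)\big)$. We take coefficients $R=\F_n$. For $n$ odd we have $H^*(B\Z/n;\F_n)=\F_n[t]\otimes\Lambda[e]$ with $\deg t=2$ and $\deg e=1$. For $n=2$ we have $H^*(B\Z/2;\F_2)=\F_2[t]$ with $\deg t=1$; this case runs in parallel and we write $t^2$ in place of $t$ where degrees have to match.

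The first step is the index of the target. Since $n$ is prime and $W_n(\R^k)$ contains no nonzero $\Z/n$-fixed vectors, $\Z/n$ acts freely on the sphere $S\big(W_n(\R^k)\big)$, which has dimension $k(n-1)-1$. The Gysin sequence of the sphere bundle $S(W_n(\R^k))_G\to B\Z/n$ then gives that $\ker\pi_Y^*$ is generated by the mod-$n$ Euler class of $W_n(\R^k)$. Up to a unit this class is $t^{k(n-1)/2}$; for $n=2$ it is $t^k$, and for odd $n$ the exponent is an integer because $n-1$ is even. By Eq.~(\ref{eq:FHinclusion}), it therefore suffices to prove
\[
\pi_X^*\big(t^{m}\big)\neq 0\quad\text{in } H^*\big((E\Z/n\times\cF_r(\C^{rn}))/\Z/n;\F_n\big)\qquad\text{for all } m\leq n^{a+1}-1 .
\]
Indeed, Eq.~(\ref{eq:kineq}) is equivalent to $k(n-1)/2\leq n^{a+1}-1$.

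The second and main step is the index of the flag manifold $X=\cF_r(\C^{rn})$. We use the Serre spectral sequence of $X\to X_G\to B\Z/n$. The cohomology $H^*(X;\F_n)$ is concentrated in even degrees and is generated by the Chern classes of the $n$ tautological rank-$r$ bundles, which $\Z/n$ permutes cyclically. The classes $t^m$ lie on the bottom row $E_2^{*,0}$. We must show that none of them with $m\leq n^{a+1}-1$ is hit by a differential. We first write $\C^{rn}=\C^{n^{a+1}}\otimes\C^{q}$ and reduce the coprime factor $q$: a transfer argument, or a $\Z/n$-equivariant map $\cF_{n^a}(\C^{n^{a+1}})\to\cF_r(\C^{rn})$ given by $Q_i\mapsto Q_i\otimes\1_q$, should show that $q$, being invertible mod $n$, does not affect the index. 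Monotonicity of the index under equivariant maps then gives $\mathrm{Index}(\cF_r(\C^{rn}))\subseteq\mathrm{Index}(\cF_{n^a}(\C^{n^{a+1}}))$, so nonvanishing of $\pi^*(t^m)$ on the smaller space transfers to the larger one. For $r=n^a$ we would either use the known computation of the Fadell--Husseini index of complex flag manifolds under cyclic permutation of $p$-power block sizes, as in \cite{Jelic} and \cite{Equivariant}, or prove it by induction on $a$. The induction would group the $n^{a+1}$ coordinates into $n$ blocks of size $n^a$, identify the invariant part of the $E_2$-page with symmetric functions in the Chern roots, and show that the first possible transgression hitting the bottom row lands in degree $2n^{a+1}$. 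That degree comes from the top Chern class of the rank-$n^{a+1}$ regular-representation bundle, which equals $t^{n^{a+1}}$ up to lower-order terms. We expect this degree count, showing that no earlier differential kills $t^m$, to be the main obstacle.

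Finally we assemble the two steps. If a $\Z/n$-equivariant map existed, then $t^{k(n-1)/2}\in\ker\pi_Y^*\subseteq\ker\pi_X^*$, which contradicts the second step because $k(n-1)/2\leq n^{a+1}-1$. So no $\Z/n$-equivariant map exists, hence no $S_n$-equivariant one exists, and Lemma~\ref{lem:balancedequi} yields the desired decomposition $P=\sum_i P_i$ with $\rank{P_i}\leq r$ and $f(P_1)=\cdots=f(P_n)$.
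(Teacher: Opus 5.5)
Your framework is the paper's own: Lemma~\ref{lem:balancedequi} restricted to $G=\Z/n$, coefficients in $\F_n$, the index of $S\big(W_n(\R^k)\big)$ generated by the Euler class $t^{k(n-1)/2}$ (resp.\ $t^k$ for $n=2$), and the degree comparison $k(n-1)/2\le n^{a+1}-1$. All of this is correct. Your reduction from $r=n^aq$ to $r=n^a$ via the equivariant map $Q_i\mapsto Q_i\otimes\1_q$ is also correct: it only gives $\mathrm{Index}\big(\cF_r(\C^{rn})\big)\subseteq\mathrm{Index}\big(\cF_{n^a}(\C^{n^{a+1}})\big)$, but that is the direction you need.

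The gap is the step that carries the whole theorem, namely the lower bound $\pi_X^*(t^m)\neq 0$ for all $m\le n^{a+1}-1$ on the flag manifold. You do not establish it. You either attribute it to \cite{Jelic} and \cite{Equivariant}, which the paper cites only for the general method, or you propose an induction whose decisive degree count you yourself leave open. The paper takes exactly this input from the literature, for general $r=n^aq$. For odd $n$, \cite{BasuKunduFlag} computes $\ker\pi_X^*=(e\,t^{N-1},t^{N})$ with $N=n^{a+1}$ (in your notation). For $n=2$, \cite{Nath2024} computes $\ker\pi_X^*=(t^{2^{a+2}-1})$. With these results quoted, your argument closes and coincides with the paper's.

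Your sketch of the induction also aims at the wrong direction. A Chern-class identity that produces $t^{n^{a+1}}$ explains why that class dies, not why smaller powers survive. For example, take odd $n$ and $\zeta=e^{2\pi i/n}$. The map $x\mapsto\sum_i\zeta^iQ_i$ identifies the trivial rank-$rn$ bundle over $X_G$ with the $rn$-fold sum of the line bundle of a faithful character. Hence $(1+t)^{rn}=1$, and since $q$ is a unit mod $n$, $t^{n^{a+1}}=0$. This is where coprimality enters, but it only controls the upper end of the index.

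Moreover, by the computation above, the first bottom-row class to die is the odd class $e\,t^{N-1}$ in degree $2N-1$. So your claim that no differential reaches the bottom row below degree $2n^{a+1}$ is false as written. An inductive proof would have to control only the even classes $t^m$. That requires a genuine analysis of $H^*(X;\F_n)$ as an $\F_n[\Z/n]$-module, and the proposal does not supply it.
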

\noindent\emph{Remark:} Let us specify these for two cases:
\begin{enumerate}[(i)]
    \item $r=1$ and $n$ prime leads to $k\leq 2$.
    \item $n=r$ prime gives $k\leq 2(n+1)$
\end{enumerate}
\begin{proof}
    Using Lemma \ref{lem:balancedequi} this follows from the results in \cite{BasuKunduFlag,Nath2024} where the Fadell-Husseini index $\ker\left(\pi_{X}^*:H^*(EG/G;\F_n)\rightarrow H^*(X_G;\F_n)\right)$ of $X=\cF_r(\C^{rn})$ was computed for the cyclic subgroup $G=\Z_n$ for prime $n$ and singular cohomology with coefficients in the field $\F_n$. 
    
    For $n$ an odd prime, the underlying cohomology ring is (computed e.g. in \cite{Dieck+1987}, p.187):
    \begin{equation}
    H^*(E\Z_n/\Z_n;\F_n)\simeq \F_n[u,v]/(u^2),\qquad \deg u =1,\quad \deg v=2.
    \end{equation} Hence, $u^2=0$, which implies that $H^{2l}$ is generated by $v^l$ and $H^{2l+1}$ by $v^l u$. For this case, it was shown in \cite{BasuKunduFlag} that $\ker(\pi_X^*)$ is the ideal generated by the two elements $(u v^{N-1},v^N)$ where $N:=n^{a+1}$.

    Setting $Y:=S(W_n(\R^k))$, this has to be compared with $\ker(\pi_Y^*)$, which turns out to be the ideal generated by $v^{k(n-1)/2}$ \cite{Volovikov}. An inclusion $\ker(\pi_Y^*)\subseteq\ker(\pi_X^*)$ would thus require $v^{k(n-1)/2}\in (v^N)$, which is equivalent to $k(n-1)/2 \geq N$, and fails to hold if Eq.(\ref{eq:kineq}) is true. 

    For $n=2$, the argument is similar: the underlying cohomology ring is $\F_2[t]$, with $\deg t=1$ (\cite{Dieck+1987}, p.187), so that $H^l\simeq\F_2\cdot t^l$. In \cite{Nath2024} it was shown that $\ker(\pi_X^*)$ is the ideal generated by $t^M$ with $M:=2^{a+2}-1$. In this case, $\ker(\pi_Y^*)$ is the ideal generated by $t^k$ (cf. \cite{BLAGOJEVIC20111326}, Thm.3.13).
    So an inclusion $\ker(\pi_Y^*)\subseteq\ker(\pi_X^*)$ would require $M\leq k$, which is again violated by Eq.(\ref{eq:kineq}).
\end{proof}
The remaining part of this section will collect some consequences of Thm.\ref{thm:equivP}:

\begin{corollary}[Stronger RA decomposition]\label{cor:strongerRA}
    Let $T$ be any cptp endomorphism on $\C^{2\times 2}$ and $g$ any continuous map from the set of those endomorphisms into $\R^2$. Then there exist cptp maps $T_1, T_2$ that are generalized extreme points (i.e. have Choi matrices of rank at most two) s.t.
    \begin{equation}
        T=\frac12\big(T_1+T_2\big)\quad\text{and}\quad g(T_1)=g(T_2).
    \end{equation}
\end{corollary}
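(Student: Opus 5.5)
We apply Thm.~\ref{thm:equivP} with $n=2$ and $r=2$ to the Choi matrix $P$ of $T$, which acts on $\C^2\otimes\C^2$. Since $\rank{P}\leq 4=rn$ and $r=2=2^1\cdot 1$ (so $a=1$, $q=1$), inequality (\ref{eq:kineq}) permits
\begin{equation*}
k\leq 2(2^2-1)=6
\end{equation*}
real constraint components. The trace-preservation constraint on the parts, which is needed to make each $P_i$ (up to the factor $2$) a Choi matrix of a cptp map, will be encoded into $f$ together with $g$.

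Concretely, we use the normalization in which $T$ is trace preserving iff ${\rm tr}_2[P]=\1$; any other normalization changes only constants. We define $f:{\rm Herm}(\C^2\otimes\C^2)\rightarrow\R^3\times\R^2$. The first component is $f_1(X):={\rm tr}_2[X]$, read off as the three real coordinates of a traceless-part-plus-trace parametrization of a $2\times 2$ Hermitian matrix, modulo the trace. Since ${\rm tr}[{\rm tr}_2 X]$ is determined once ${\rm tr}_2[P_1]+{\rm tr}_2[P_2]=\1$ and the other components agree, it suffices in fact to record all $4$ real parameters of ${\rm tr}_2[X]$. The second component is $f_2(X):=g(T_X)$, where $T_X$ is the map with Choi matrix $2X$ after a continuous normalization. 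This gives $k=6$.

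The hard part is making $f_2$ well defined and continuous on all of ${\rm Herm}$, because $2X$ need not be the Choi matrix of a cptp map. We handle this by extending $g$. The set of cptp maps is a compact convex set $C$, so there is a continuous retraction $\rho$ of the affine space of Hermiticity-preserving maps onto $C$, for instance the nearest-point projection. We put $f_2(X):=g(\rho(T_{2X}))$, where $T_{2X}$ denotes the map with Choi matrix $2X$; this is continuous.

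Thm.~\ref{thm:equivP} then yields $P=P_1+P_2$ with $P_i\geq0$, $\rank{P_i}\leq2$, and $f(P_1)=f(P_2)$. From the equality of the $f_1$-components we get ${\rm tr}_2[P_1]={\rm tr}_2[P_2]=\tfrac12\1$. Hence each $2P_i$ is the Choi matrix of a cptp map $T_i$, and $T=\tfrac12(T_1+T_2)$. Because $T_i\in C$, the retraction acts trivially on it, so $f_2(P_i)=g(T_i)$ and the $f_2$-components give $g(T_1)=g(T_2)$. Finally, $\rank{P_i}\leq 2=d_1$ means each $T_i$ is a generalized extreme point.

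The main point to verify is the dimension count $k=4+2=6\le 6$, which is exactly tight. Using ${\rm tr}[P_i]$ redundancy one could even save a component, but that is not needed.
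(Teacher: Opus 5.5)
Your proof is correct and follows the paper's argument. Thm.~\ref{thm:equivP} with $n=r=2$ gives $k=6$; four components enforce ${\rm tr}_2[P_1]={\rm tr}_2[P_2]$, two balance $g$, and your nearest-point retraction does the job of the paper's Tietze extension. One caution: your closing remark that a component could be saved is wrong, and so is the earlier $\R^3\times\R^2$ framing. Matching only the traceless parts of ${\rm tr}_2[P_i]$ leaves the traces, and hence the weights, unequal, which is exactly the weak-RA situation of Cor.~\ref{cor:weak3RA}. So all four real components are needed, as you in fact use.
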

\begin{proof}
    We apply Thm.\ref{thm:equivP} with $r=n=2$ to the Choi matrix of $T$. For the two positive components in the resulting decomposition to correspond to proper Choi matrices (up to a normalizing factor of $2$), they need to have equal values of ${\rm tr}_2[\cdot]$, which is then fixed by linearity. Within the Hermitian matrices, these are $2\times 2=4$ real constraints. Since  Thm.\ref{thm:equivP} allows for $k=6$ real constraints, we can use the remaining two to balance the components of $g$. The required continuous extension of $g$ is always possible due to the Tietze extension theorem and the closedness of the set of Choi matrices.  
\end{proof}
\begin{corollary}[Weak RA for $d=3$]\label{cor:weak3RA}
    The weak RA conjecture holds for $d_1=d_2=3$. That is, for any cptp endomorphism $T$ on $\C^{3\times 3}$ there are generalized cptp extreme points $T_1,T_2,T_3$ and weights $\lambda\in\R_+^3$ with $\sum_{i=1}^3\lambda_i=1$ s.t.
    \begin{equation}
        T=\sum_{i=1}^3 \lambda_i T_i.
    \end{equation}
\end{corollary}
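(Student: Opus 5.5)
We apply Thm.~\ref{thm:equivP} with $n=r=3$ to the Choi matrix $P\in{\rm Herm}(\C^3\otimes\C^3)$ of $T$. Since $\rank{P}\leq 9=rn$ and $n=3$ is prime with $r=3^1\cdot 1$ (so $a=1$, $q=1$), condition (\ref{eq:kineq}) permits up to
\[
k\leq \frac{2(3^2-1)}{3-1}=8
\]
real constraints. We therefore need a continuous map $f:{\rm Herm}(\C^3\otimes\C^3)\rightarrow\R^8$ such that $f(P_1)=f(P_2)=f(P_3)$ forces each ${\rm tr}_2[P_i]$ to be proportional to $\1$.

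The natural choice is to make $f$ measure the traceless part of the partial trace. Concretely, set
\[
f(X):={\rm tr}_2[X]-\tfrac13\tr{X}\1,
\]
the traceless Hermitian $3\times3$ matrix obtained from ${\rm tr}_2[X]$, and identify it with $\R^8$ via a basis of traceless Hermitian matrices (e.g.\ Gell-Mann matrices). This $f$ is linear, hence continuous. By linearity, $f(P_1)=f(P_2)=f(P_3)$ implies that each $f(P_i)=\tfrac13 f(P)$. Since $T$ is trace preserving, ${\rm tr}_2[P]=\1$ (normalizing the Choi matrix appropriately), so $f(P)=0$. Hence $f(P_i)=0$, i.e.\ ${\rm tr}_2[P_i]=\tfrac13\tr{P_i}\1\propto\1$ for every $i$.

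Theorem~\ref{thm:equivP} then yields $P=P_1+P_2+P_3$ with $P_i\geq0$, $\rank{P_i}\leq 3$, and ${\rm tr}_2[P_i]=\lambda_i\1$, where $\lambda_i=\tr{P_i}/3\geq0$ and $\sum_i\lambda_i=1$. For $\lambda_i>0$ we set $T_i$ to be the map with Choi matrix $P_i/\lambda_i$. It is completely positive, and it is trace preserving since its partial trace is $\1$. Its Kraus rank is at most $3=d_1$, so it is a generalized extreme point. If some $\lambda_i=0$, then $P_i=0$ and we may take $T_i$ to be any generalized extreme point, for instance $T$ itself if applicable, or any unitary channel. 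This gives $T=\sum_i\lambda_iT_i$.

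The only delicate point is the dimension count. The constraint ${\rm tr}_2[P_i]\propto\1$ consumes exactly $8$ real parameters, which saturates the bound $k\leq8$. By contrast, the strong form would require equal traces as well, i.e.\ $9$ constraints, and this is exactly why only the weak conjecture follows. One should also check the normalization convention for the Choi matrix. It affects only the scalar and is harmless, since proportionality to $\1$ is scale invariant.
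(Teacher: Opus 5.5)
Your proposal is correct and matches the paper's proof: apply Thm.~\ref{thm:equivP} with $n=r=3$ (so $k\le 8$) to the Choi matrix, with $f$ the linear map extracting the eight traceless components of ${\rm tr}_2[\cdot]$; linearity together with ${\rm tr}_2[P]=\1$ then forces ${\rm tr}_2[P_i]\propto\1$, and the proportionality constants serve as the weights $\lambda_i$. You also handle the degenerate case $\lambda_i=0$ and explain why the nine constraints needed for strong RA exceed the bound, two points the paper leaves implicit.
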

\begin{proof}
    Again, this follows from Thm.\ref{thm:equivP} applied to the Choi matrix of $T$, now with $r=n=3$. The constraint function $f$ is chosen linear and such that it picks out the $8$ traceless components of the partial trace ${\rm tr}_2[\cdot]$. In this way, the resulting matrices in the decomposition become proper Choi matrices up to a positive scalar multiple, which is taken care of by $\lambda$.
\end{proof}

\begin{corollary}[Decomposition with similar marginals]\label{cor:marg}
    Let $n$ be prime,  $\rho\in{\rm Herm}(\C^n\otimes\C^n)$ be any density operator and $g:{\rm Herm}(\C^n\otimes\C^n)\rightarrow\R^2$ continuous. Then there are density operators $(\rho_i)_{i=1}^n$ of rank at most $n$ s.t. $g(\rho_i)=g(\rho_j)$ for all $i,j$, 
    \begin{equation}  \label{eq:eqispecmarg}      \rho=\frac1n\sum_{i=1}^n\rho_i\quad\text{and}\quad \spec\big({\rm tr}_\alpha[\rho_i]\big)= \spec\big({\rm tr}_\alpha[\rho_j]\big)\quad\forall i,j\ \forall \alpha\in\{1,2\},
    \end{equation}
    where $\spec(\cdot)$ denotes the spectrum as a multiset. 
\end{corollary}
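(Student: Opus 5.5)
The plan is to apply Thm.~\ref{thm:equivP} with $P:=\rho$, $\cH=\C^n\otimes\C^n$, $n$ prime and $r:=n$. The rank condition holds trivially, since $\rank{\rho}\leq n^2=rn$. Writing $r=n^a q$ with $a=1$ and $q=1$, which is coprime to $n$, the admissible number of real constraints in Eq.~(\ref{eq:kineq}) is
\begin{equation*}
k\leq \frac{2(n^2-1)}{n-1}=2(n+1),
\end{equation*}
as in Remark (ii) after Thm.~\ref{thm:equivP}. The task therefore reduces to encoding all requirements of the corollary in a continuous map $f:{\rm Herm}(\C^n\otimes\C^n)\rightarrow\R^k$ with $k\leq 2n+2$.

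The key observation concerns the spectrum of a Hermitian $n\times n$ matrix $A$. As a multiset it is determined by the power sums $\tr{A^m}$ for $m=1,\ldots,n$: by Newton's identities these fix the elementary symmetric polynomials of the eigenvalues, hence the characteristic polynomial, hence the multiset of (real) eigenvalues. Each map $X\mapsto \tr{({\rm tr}_\alpha[X])^m}$ is a real-valued polynomial in the entries of $X$, and thus continuous. I would therefore take $f$ to have the following components:
\begin{itemize}
\item $\tr{X}$, which is one constraint and is shared by both marginals, since $\tr{{\rm tr}_1[X]}=\tr{{\rm tr}_2[X]}=\tr{X}$;
\item $\tr{({\rm tr}_\alpha[X])^m}$ for $\alpha\in\{1,2\}$ and $m=2,\ldots,n$, which gives $2(n-1)$ constraints;
\item the two components of $g(X)$.
\end{itemize}
This totals $k=1+2(n-1)+2=2n+1\leq 2n+2$, so the hypothesis of Thm.~\ref{thm:equivP} is satisfied.

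Thm.~\ref{thm:equivP} then yields positive operators $P_1,\ldots,P_n$ of rank at most $n$ with $\rho=\sum_i P_i$ and $f(P_1)=\cdots=f(P_n)$. Equality of the trace components together with $\tr{\rho}=1$ gives $\tr{P_i}=1/n$. Hence $\rho_i:=nP_i$ are density operators of rank at most $n$ satisfying $\rho=\frac1n\sum_i\rho_i$.

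It remains to pass the equalities from $P_i$ to $\rho_i$. Since all $P_i$ carry the same common factor $n$, equal power sums of ${\rm tr}_\alpha[P_i]$ imply equal power sums of ${\rm tr}_\alpha[\rho_i]$, and therefore Eq.~(\ref{eq:eqispecmarg}) by the Newton-identity argument. The constraint $g$ is the delicate point, because $g$ is balanced on $P_i$ rather than on $\rho_i=nP_i$. This is fixed by using $X\mapsto g(nX)$ in place of $g$ as the last two components of $f$; that map is still continuous. With this choice $g(\rho_i)=g(\rho_j)$ for all $i,j$.

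The only step requiring care is the constraint count. One must use the shared trace to save a constraint (the $m=1$ power sum is common to both marginals) and argue that power sums up to degree $n$ suffice to determine the spectrum, so that the count fits within the bound $2(n+1)$ supplied by Thm.~\ref{thm:equivP}.
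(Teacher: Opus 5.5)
Your proposal is correct and matches the paper's proof: apply Thm.~\ref{thm:equivP} with $r=n$ (so $k\le 2(n+1)$), equalize the power sums of both marginals up to degree $n$ to get equal spectra, and append $X\mapsto g(nX)$ to handle the rescaling $\rho_i=nP_i$. The only difference is cosmetic. The paper equalizes all $2n$ power sums of the rescaled marginals $n\,{\rm tr}_\alpha[\cdot]$ and so uses exactly $2n+2$ constraints. You instead note that the degree-one power sums coincide with $\tr{X}$ and use $2n+1$, which is valid but not needed.
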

\begin{proof}
    We employ Thm.\ref{thm:equivP} with $r=n$ and use that $A,B\in{\rm Herm}(\C^n)$ have the same spectrum iff $\tr{A^l}=\tr{B^l}$ for all $l=1,\ldots,n$. Choosing $f(Z):=(\tr{Z_1^l},\tr{Z_2^l})_{l=1}^n \in\R^{2n}$ with $Z_\alpha:=n\cdot {\rm tr}_\alpha[Z]$ then leads to the decomposition in Eq.(\ref{eq:eqispecmarg}) with $P_i=\rho_i/n$. Moreover, as Thm.\ref{thm:equivP} allows $k=2n+2$ real constraints, we can append the two components of $x\mapsto g(n x)$ to $f$ and obtain the sought decomposition.
\end{proof}

\begin{corollary}[Extended Schur-Horn decomposition]\label{cor:SHext}
    Let  $\rho\in{\rm Herm}(\cH)$ be any density operator, $N\in\N$  s.t. $\rank{\rho}\leq N$, and $g$ any real-valued continuous function on ${\rm Herm}(\cH)$. Assume that either 
    \begin{enumerate}
        \item $N$ is prime, or
        \item $g$ is linear.
    \end{enumerate}Then there exist $\varphi_i\in\cH$ with $\|\varphi_i\|=1$ s.t.
    \begin{equation}
        \rho=\frac1N\sum_{i=1}^N |\varphi_i\rangle\langle\varphi_i| \quad\text{and}\quad g(|\varphi_i\rangle\langle\varphi_i|)=g(|\varphi_j\rangle\langle\varphi_j|)\quad\forall i,j.
    \end{equation}
\end{corollary}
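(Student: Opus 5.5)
Case (1), $N$ prime, will be handled by Thm.~\ref{thm:equivP} with $n=N$ and $r=1$. We take $P:=\rho/N$, so $\rank{P}\leq N=rn$. By Remark~(i) after Thm.~\ref{thm:equivP}, for $r=1$ (so $a=0$, $q=1$, which is coprime to $n$) we may balance $k\leq 2$ real constraints. We use the two-component map
\[
f(Z):=\big(\tr{Z},\, g(Z/\tr{Z})\big),
\]
where the second component must be made continuous near $Z=0$. To achieve this, we first note that we may restrict to $Z\geq 0$ with $\tr{Z}>0$, and then extend continuously by the Tietze extension theorem, as in Cor.~\ref{cor:strongerRA}. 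A cleaner alternative is to cut off the denominator, e.g.\ use $g\big(Z/\max(\tr{Z},\epsilon)\big)$ with $\epsilon<1/N$. This is harmless: once the first component is balanced, every part has trace exactly $1/N$, so the cutoff is never active.

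Thm.~\ref{thm:equivP} then yields positive $P_i$ of rank at most one with $\sum_i P_i=\rho/N$ and $f(P_1)=\cdots=f(P_N)$. Equality of the traces together with $\tr{\rho}=1$ gives $\tr{P_i}=1/N$. Hence $P_i=\tfrac1N|\varphi_i\rangle\langle\varphi_i|$ with $\|\varphi_i\|=1$, and equality of the second components gives $g(|\varphi_i\rangle\langle\varphi_i|)=g(|\varphi_j\rangle\langle\varphi_j|)$ for all $i,j$. The main point to check in this case is the continuity and extension issue for $f$, which the cutoff settles.

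Case (2), $g$ linear, will be handled by reducing to Prop.~\ref{prop:SchurHorn}, since topology is unavailable for composite $N$. By linearity, the condition $g(|\varphi_i\rangle\langle\varphi_i|)=c$ for all $i$ forces $c=g(\rho)$. The idea is to shift $g$ by a trace term so that it vanishes on $\rho$ and the constraint becomes a second trace condition. Write $g(X)=\tr{GX}$ with $G$ Hermitian, and set $\tilde G:=G-g(\rho)\1$; then $\tr{\tilde G\rho}=0$. Restrict everything to $\cH_N\supseteq{\rm supp}(\rho)$ of dimension $N$, and compress $\tilde G$ to this space; vectors $\varphi_i$ in the support only see the compression. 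In an eigenbasis of $\rho$ padded to $N$ dimensions, we now seek unit vectors $\varphi_i$ with $\sum_i|\varphi_i\rangle\langle\varphi_i|=N\rho$ and $\langle\varphi_i|\tilde G|\varphi_i\rangle=0$.

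Equivalently, writing $\varphi_i=\sqrt{N\rho}\,u_i$ for an orthonormal basis $(u_i)$ of $\cH_N$ (valid after the same isometric trick as in Lemma~\ref{lem:balancedequi}), we need a unitary $U$ whose columns $u_i$ satisfy two conditions: $\langle u_i|N\rho|u_i\rangle=1$, and $\langle u_i|\sqrt{\rho}\tilde G\sqrt{\rho}|u_i\rangle=0$. Put $A:=N\rho-\1$ and $B:=\sqrt\rho\,\tilde G\sqrt\rho$; both are traceless Hermitian. Hence $A+iB$ is a traceless matrix, and we need an orthonormal basis in which it has zero diagonal. This is the classical fact that every traceless matrix is unitarily similar to one with zero diagonal. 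The proof is by induction: the numerical range of a traceless matrix contains $0$, so we pick a unit vector $u_1$ with $\langle u_1|(A+iB)|u_1\rangle=0$, compress to $u_1^\perp$, which is again traceless, and repeat. This step is the crux of case (2), and it is exactly where linearity of $g$ enters.

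Finally, the null-space components need care. If $\rank{\rho}<N$, then $\sqrt{\rho}$ kills part of $u_i$, but the formulas above remain correct on $\cH_N$, so no separate treatment is needed.
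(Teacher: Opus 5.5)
\paragraph{Case (1).} This is the paper's argument (Thm.~\ref{thm:equivP} with $n=N$, $r=1$, $k=2$), but there is a normalization slip to fix. With $P:=\rho/N$ the balanced parts have $\tr{P_i}=1/N^2$, not $1/N$, so $P_i=\tfrac{1}{N^2}|\varphi_i\rangle\langle\varphi_i|$. A cutoff with $\epsilon\in(1/N^2,1/N)$ is then actually active, and you only balance $g$ at rescaled projectors, which gives nothing for nonlinear $g$. Take $P:=\rho$, or $\epsilon\leq 1/N^2$. The paper avoids the issue by using $\tilde g(\sigma):=g(N\sigma)$, which is continuous on all of ${\rm Herm}(\cH)$. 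Your first suggestion via Tietze would not work as stated: $\{Z\geq 0,\ \tr{Z}>0\}$ is not closed, and $g(Z/\tr{Z})$ generally has no continuous extension to $Z=0$.

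\paragraph{Case (2).} Your argument is correct and takes a genuinely different route. Contrary to your remark, topology is available for composite $N$. The paper writes $N=p_1 m$ with $p_1$ prime and applies Thm.~\ref{thm:equivP} with $n=p_1$, $r=m$, $f=(\tr{\cdot},g)$; this is admissible since $k=2$ always satisfies Eq.~(\ref{eq:kineq}). It obtains density matrices $\rho_i$ of rank at most $m$ with $g(\rho_i)=g(\rho)$ by linearity, and recurses along the prime factorization until case (1) finishes.

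You instead absorb the forced constant $g(\rho)$ into $\tilde G$. You then use that the traceless matrix $A+iB$ has zero diagonal in some orthonormal basis, via Toeplitz--Hausdorff convexity of the numerical range and compression to $u_1^\perp$. One detail: when $\dim\cH<N$, the embedding $V$ from Lemma~\ref{lem:balancedequi} is genuinely needed. So read $A=NV\rho V^*-\1_N$ and $B=V\sqrt{\rho}\,\tilde G\sqrt{\rho}\,V^*$ as traceless Hermitian operators on $\C^N$.

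\paragraph{Comparison.} Your route is elementary and constructive. It does not depend on the factorization of $N$ or on the Fadell--Husseini index computations of \cite{BasuKunduFlag,Nath2024}. It is the natural extension of the standard proof of Prop.~\ref{prop:SchurHorn}, and it also covers prime $N$ whenever $g$ is linear. Both routes ultimately balance only the trace and one real linear functional, and neither reaches nonlinear $g$ for composite $N$. The paper's proof is less economical for this particular statement, but it stays within the section's framework and shows how Thm.~\ref{thm:equivP} can be iterated over prime factors.
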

\begin{proof} \emph{(1)} Suppose $N$ is prime. Then the statement follows from Thm.\ref{thm:equivP}  with $n=N, r=1$ by using $f(\cdot):=\big(\tr{\cdot},\tilde{g}(\cdot)\big)$, where $\tilde{g}:\sigma\mapsto g(n\sigma)$. 

\emph{(2)} If $N$ is composite, choose a prime divisor $p_1\; |\; N$ and write $N=p_1\cdot m$. We apply Thm.\ref{thm:equivP} to $P=\rho$ with $n=p_1$, $r=m$ and $f(\cdot):=\big(\tr{\cdot},g(\cdot)\big)$. This is justified as $k=2$ always fulfills Eq.(\ref{eq:kineq}). Then Thm.\ref{thm:equivP} leads to density matrices $\rho_i$ of $\rank{\rho_i}\leq m$ s.t. $\rho=\tfrac{1}{p_1}\sum_{i=1}^{p_1} \rho_i$ and, by linearity, $g(\rho_i)=g(\rho)$. Now we apply this procedure recursively to each $\rho_i$, i.e., we choose a prime divisor $p_2$ of $m$, apply Thm.\ref{thm:equivP} to $P=\rho_i$ with $n=p_2$ etc., until we end up with the last prime divisor for which \emph{(1)} completes the proof.
\end{proof}

While we do not know to what extent the restriction to prime $n$  in Cor.\ref{cor:marg} or the nonlinear case of Cor.\ref{cor:SHext} is necessary, the following implication is an example where the restriction to prime $n$ can be lifted:
\begin{proposition}[Balanced ONB]\label{prop:balONB} Let $\cP_1(\cH):=\{|\psi\rangle\langle\psi|\;|\; \psi\in\cH, \|\psi\|=1\}$ be the set of positive rank-one projectors on a Hilbert space $\cH\simeq\C^n$. For any continuous function $g:\cP_1(\cH)\rightarrow\R$,  there is an orthonormal basis $\{e_i\}_{i=1}^n$ of $\cH$ s.t. $g(|e_i\rangle\langle e_i|)=g(|e_j\rangle\langle e_j|)$ for all $i,j$.
    \end{proposition}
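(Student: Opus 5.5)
The plan is to avoid the prime restriction by using a single configuration space whose topology does not depend on the arithmetic of $n$. Take the space of orthonormal bases (complete flags) $\cF_1(\C^n)\simeq U(n)/U(1)^n$ directly, rather than iterating Thm.~\ref{thm:equivP} along prime factors. Iterating fails for nonlinear $g$: the common value reached inside one block need not match the value reached in another block, and there is no evident continuous block-invariant $h(Q)$ that would force it.

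Concretely, apply Lemma~\ref{lem:balancedequi} with $P=\1$ and $r=1$. Here $P_i=Q_i$ are themselves rank-one projectors, so the trace constraint is automatic and no component of $f$ is spent on it. Take $f(X):=g(X)$ on $\cP_1(\cH)$, extended continuously to ${\rm Herm}(\cH)$ by Tietze (only its values on $\cP_1$ matter). Then $k=1$. By the lemma, it suffices to show that there is no $S_n$-equivariant map
\[
\cF_1(\C^n)\rightarrow S\big(W_n(\R)\big)\simeq S^{n-2},
\]
and by the Remark it is enough to exhibit some subgroup $G\subseteq S_n$ for which no $G$-equivariant map exists.

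I would prove the nonexistence by an obstruction (Euler class) argument rather than by Fadell--Husseini indices for cyclic groups of composite order. Equivalently, consider the vector bundle $E:=\cF_1(\C^n)\times_{S_n}W_n(\R)$; an equivariant map to the sphere exists iff this bundle has a nowhere-vanishing section. Since the fiber has rank $n-1$ and the base has real dimension $n(n-1)\geq n-1$, the obstruction lives in degree $n-1$ and need not vanish for dimensional reasons. The steps are as follows.
\begin{enumerate}[(a)]
\item Choose an explicit equivariant section whose zero set can be computed. I would take $s(U):=\big(\langle e_i,Ae_i\rangle-\tfrac1n\tr{A}\big)_i$ for a diagonal $A$ with distinct eigenvalues. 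By the Schur--Horn picture (cf. Prop.~\ref{prop:SchurHorn}), its zeros are exactly the bases with constant diagonal of $A$.
\item Compute the mod-$2$ (or twisted integer) Euler class. This is done by perturbing $s$ to a transverse section, or by restricting to a suitable finite subgroup $G$ (e.g. a Sylow $2$-subgroup, or $(\Z_p)^a$ for a prime power dividing $n$). For such $G$ the cohomology of the Borel construction is accessible and the top Stiefel--Whitney class of $W_n(\R)$ restricted to $\cF_1(\C^n)_G$ can be shown nonzero.
\item Conclude that no nowhere-vanishing equivariant section exists. For the given $g$, the section $\psi$ from the proof of Lemma~\ref{lem:balancedequi} therefore has a zero, i.e. there is a basis with $g(|e_i\rangle\langle e_i|)$ all equal.
\end{enumerate}

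The main obstacle is step (b): showing that the obstruction class is nonzero for \emph{every} $n$, not just primes. The zero set in (a) is degenerate (a positive-dimensional family of bases), so a naive count does not work. The first thing I would try is the classical Kakutani--Yamabe--Yujobo strategy. Namely, show that the restriction of $E$ to the orbit of a single basis under the torus $U(1)^n$ and its normalizer already carries the nonzero class. Alternatively, use an induction on $n$ that embeds $\cF_1(\C^{n-1})\times\cP_1$ into $\cF_1(\C^n)$ and tracks the Euler class through the associated fibration $\cF_1(\C^n)\rightarrow\cP_1(\C^n)$.
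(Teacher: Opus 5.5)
\textbf{There is a genuine gap: step (b) carries the whole proof, you leave it open, and the route you sketch provably cannot work for every $n$.}

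The parts you do settle are fine. With $P=\1$ and $r=1$, any $n$ positive rank-one operators summing to $\1$ already form an orthonormal basis, so $k=1$ suffices. You are also right that iterating Thm.~\ref{thm:equivP} over prime factors fails for nonlinear $g$.

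\emph{Restricting to subgroups only helps for prime powers.} If $G\subseteq S_n$ acts on $\{1,\ldots,n\}$ with more than one orbit, then $W_n(\R)$ contains a nonzero $G$-fixed vector. So a \emph{constant} $G$-equivariant map $\cF_1(\C^n)\to S(W_n(\R))$ exists, and every restricted Euler or Stiefel--Whitney class vanishes. A $p$-group is transitive on $n$ points only if $n$ is a power of $p$. Hence restricting to $(\Z_p)^a$ with $p^a<n$, or to a Sylow $2$-subgroup when $n$ is not a power of $2$, gives nothing.

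\emph{For $n$ not a prime power, the Euler class of $E$ itself is zero.} This follows from the transfer argument \"{O}zaydin used for the topological Tverberg problem. For each prime $p$, a Sylow $p$-subgroup $G_p$ is intransitive, so the pullback of $E$ to $\cF_1(\C^n)/G_p$ has a nowhere-vanishing section. Transfer then gives $[S_n:G_p]\cdot e(E)=0$. The indices $[S_n:G_p]$ over all primes have greatest common divisor $1$, so $e(E)=0$ in $H^{n-1}(\cF_1(\C^n)/S_n;\Z^{\rm sgn})$, and hence also $w_{n-1}(E)=0$. For $n=6,10,12,\ldots$ the class you propose to show nonzero therefore vanishes. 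A nonexistence proof would have to come from higher obstructions, and it is not even clear that the equivariant map you want to exclude fails to exist.

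The torus-orbit idea does not rescue this. The normalizer orbit of the standard basis is a single point of $\cF_1(\C^n)/S_n$, and $E$ is trivial over any subset of $\cF_1(\C^n)$ that maps injectively into the quotient.

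\textbf{The missing idea is that no equivariant topology on the complex flag manifold is needed.} The paper fixes an orthonormal basis $\{v_i\}$ of $\cH$ and sets $\varphi(t):=\sum_i t_iv_i$ for $t\in S^{n-1}\subseteq\R^n$ and $h(t):=g(|\varphi(t)\rangle\langle\varphi(t)|)$. It then applies the Yamabe--Yujob\^o theorem to the continuous function $h$ on $S^{n-1}$. This yields a real orthonormal basis $\{\nu_i\}$ of $\R^n$ with all $h(\nu_i)$ equal, and $\{\varphi(\nu_i)\}$ is then an orthonormal basis of $\cH$ with the required property, for every $n$. The paper uses Thm.~\ref{thm:equivP} only as an alternative argument for prime $n$.

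You invoke Yamabe--Yujob\^o merely as a heuristic for computing an obstruction class. Applied directly to $g$ restricted to the real unit sphere of $\cH$, it closes the argument immediately. Because it concerns only maps of the special form $(h(\nu_i))_i$, it never requires ruling out \emph{all} $S_n$-equivariant maps, which is exactly where your approach stalls.
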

\begin{proof}
    We will first show that for prime $n$, this is indeed a consequence of Thm.\ref{thm:equivP}. In fact, applying Thm.\ref{thm:equivP} to $P=\1$, $f(\cdot):=(\tr{\cdot},g(\cdot))$, and  $r=1$ gives positive rank-one operators $P_i$ that satisfy $g(P_i)=g(P_j)$, $\1=\sum_{i=1}^n P_i$,  and $\tr{P_i}=1$. The latter two, however, imply that the $P_i$'s must be mutually orthogonal, and thus correspond to an orthonormal basis. 

    Now we make this argument obsolete by replacing it with one that holds for any $n\in\N$. To this end, fix any orthonormal basis $\{v_i\}_{i=1}^n$ of $\cH$ and define $\varphi(t):=\sum_{i=1}^n t_i v_i$ for any $t\in S^{n-1}\subseteq\R^n$, and $h:S^{n-1}\rightarrow\R$, $h(t):=g(|\varphi(t)\rangle\langle\varphi(t)|)$. By the Yamabe–Yujob{\^o} theorem \cite{YamabeYujobo}, there exists an orthonormal basis $\{\nu_i\}_{i=1}^n$ of $\R^n$, s.t. $h(\nu_i)=h(\nu_j)$. However, this also gives a complex orthonormal basis of $\cH$. 
\end{proof}

We end this section with a remark on the SIC-POVM and MUB existence problems. Setting the issue of  prime $n$ aside, note that Cor.\ref{cor:SHext} has a structure that is almost identical to the one in the SIC-POVM Proposition \ref{prop:SIC} (after normalization of $P$). However, one crucial difference, which is the reason why the equivariant topology tool cannot be applied out-of-the-box, is that it does not give control over the constant, which we denoted by $c$ in the template problem. Lem. \ref{lem:balancedequi} and Thm.\ref{thm:equivP} only guarantee that there is \emph{some} $c$, while the SIC-POVM and MUB problems require a specific one. This is not an issue for the RA-conjecture since in this case linearity of $f$ determines $c$.

\section{Convex decompositions of quantum channels}\label{sec:convex}

In this section, we will investigate convex decompositions of quantum channels from a more general perspective. Our main result will be a condition for the structural robustness of such a decomposition (Thm.\ref{thm:robustness}), which then has consequences for the RA conjecture (Thm.\ref{thm:RAstructure}). We will also state an upper bound on the Carathéodory number (Thm.\ref{thm:Cnr}) and analyze the (non-)existence of barycentric decompositions (Thm.\ref{thm:bary}).\vspace{5pt}

We begin by fixing notation and setting up the differential-geometric framework and tools for the main argument.
\subsection{Channels and Kraus operator manifolds} We denote by $\cT$ the space of completely positive maps $T:\C^{d_1\times d_1}\rightarrow\C^{d_2\times d_2}$ that satisfy $T^*(\1)=I$ for a chosen positive definite $I$. We set $I=\1$ if we want all $T$'s to be trace-preserving, but the following arguments only require a fixed $I>0$. With the map $g:V_r:=(\C^{d_2\times d_1})^r\rightarrow {\rm Herm}(\C^{d_1})$, $g(K):=\sum_{i=1}^r K_i^* K_i$ we can parametrize elements of $\cT$ that can be represented using $r$ Kraus operators $(K_i)_{i=1}^r\in V_r$ by elements of the preimage $g^{-1}(I)=:\cV_r$. Since $I$ is a regular value of $g$, the set $\cV_r$ is a smooth (Stiefel) manifold (cf.\cite{ItenColbeckManifolds}) whose tangent space at $K\in\cV_r$ is given by the kernel of the differential of $g$, i.e., $T_K\cV_r=\ker(\d{g}_K)$. Note that $\d{g}_K:V_r\rightarrow{\rm Herm}(\C^{d_1})$, $\d{g}_K(\dot{K})=\sum_{i=1}^r K_i^*\dot{K_i}+\dot{K_i^*}K_i$. It will be useful to view $V_r$ as a real vector space with inner product $\langle X,Y\rangle_\R:=\sum_{i=1}^r\langle X_i,Y_i\rangle_\R$, with $\langle X_i,Y_i\rangle_\R:=\Real\big(\tr{X_i^* Y_i}\big)$. Similarly, we equip ${\rm Herm}(\C^{d_1})$ with the Hilbert-Schmidt inner product. With respect to these inner products, we can derive the adjoint map \begin{equation}
    (\d{g}_K)^*:{\rm Herm}(\C^{d_1})\rightarrow V_r,\quad H \mapsto 2 (K_i H)_{i=1}^r .\label{eq:dgKadjoint}
\end{equation}
\subsection*{Choi map} Next, we will have a differential-geometric look at the transition from Kraus operators to Choi matrices. To this end, define an unnormalized maximally entangled state $|\Omega\rangle:=\sum_{i=1}^{d_1} |i\rangle\otimes|i\rangle$ and denote by $|A\rangle:=(\1\otimes A)|\Omega\rangle$ the vectorized version of any $A\in\C^{d_2\times d_1}$. This leads to $\langle A | B\rangle=\tr{A^* B}$ and an expression of the form ${\rm tr}_2[|A\rangle\langle B|]=(B^* A)^T$ for the partial trace over the second tensor factor. The \emph{Choi map}, which maps tuples of Kraus operators to Choi matrices, can then be defined as
\begin{equation}
    C:\cV_r\rightarrow\cA,\quad K\mapsto\sum_{i=1}^r|K_i\rangle\langle K_i|,
\end{equation} where $\cA:=\{H\in{\rm Herm}(\C^{d_1}\otimes\C^{d_2})|{\rm tr}_2[H]=I^T\}$ is the affine space of admissible Choi matrices. The tangent space of $\cA$ at any point $\gamma$ is thus $T_\gamma\cA=\ker({\rm tr}_2)$ and the differential of the Choi map at a point $K\in\cV_r$ becomes
\begin{equation}
    \d{C}_K:\underbrace{T_K\cV_r}_{\ker (\d{g}_K)}\hspace*{-7pt}\longrightarrow\underbrace{T_{C(K)}\cA}_{\ker({\rm tr}_2)},\quad \dot{K}\mapsto\sum_{i=1}^r |K_i\rangle\langle \dot{K_i}|+|\dot{K_i} \rangle\langle K_i|. 
\end{equation}
For any $Y\in{\rm Herm}(\C^{d_1}\otimes\C^{d_2})$ define an endomorphism $\cY$ on $\C^{d_2\times d_1}$ via $|\cY(A)\rangle:=Y (\1\otimes A)|\Omega\rangle$. Self-adjointness translates from $Y$ to $\cY$ in the sense that $\tr{\cY(A)^* B}=\tr{A^*\cY(B)}$ holds for all $A,B\in \C^{d_2\times d_1}$. This enables a useful Lagrange multiplier characterization of the orthogonal complement of the range of $\d{C}_K$:
\begin{lemma}\label{lem:cokernel}
    For any $K\in\cV_r$ and $Y\in {\rm Herm}(\C^{d_1}\otimes\C^{d_2})$ the following are equivalent:
    \begin{enumerate}[(i)]
        \item For all $\dot{K}\in T_K\cV_r\  :$  $ \  \langle Y,\d{C}_K(\dot{K})\rangle_\R=0, $
        \item There is a $\Lambda\in{\rm Herm}(\C^{d_1})$ s.t. $\cY(K_i)=K_i\Lambda$ holds for all $i\in\{1,\ldots,r\}$.
    \end{enumerate}
\end{lemma}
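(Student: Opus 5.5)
The plan is to rewrite condition (i) as a statement about the adjoint of $\d{C}_K$ and then use the standard Lagrange-multiplier identity $\ker(\d{g}_K)^\perp={\rm ran}\big((\d{g}_K)^*\big)$, together with the explicit adjoint in Eq.~(\ref{eq:dgKadjoint}).

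\textbf{Step 1: compute the pairing.} First compute $\langle Y,\d{C}_K(\dot K)\rangle_\R$ for an arbitrary $\dot K\in V_r$. Since $\d{C}_K(\dot K)=\sum_i |K_i\rangle\langle\dot K_i|+|\dot K_i\rangle\langle K_i|$ and $Y$ is Hermitian, we get
\[
\tr{Y\,\d{C}_K(\dot K)}=\sum_i \langle \dot K_i|Y|K_i\rangle+\langle K_i|Y|\dot K_i\rangle = 2\sum_i\Real\,\langle \dot K_i|Y|K_i\rangle .
\]
By definition, $Y|K_i\rangle=|\cY(K_i)\rangle$, and $\langle A|B\rangle=\tr{A^*B}$. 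Hence
\[
\langle Y,\d{C}_K(\dot K)\rangle_\R=2\sum_i\langle \dot K_i,\cY(K_i)\rangle_\R=\langle \dot K, 2(\cY(K_i))_i\rangle_\R .
\]
So the linear functional $\dot K\mapsto\langle Y,\d{C}_K(\dot K)\rangle_\R$ on all of $V_r$ is represented by the vector $Z:=2(\cY(K_i))_{i=1}^r\in V_r$.

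\textbf{Step 2: reduce (i) to a range condition.} Condition (i) says exactly that $Z\perp T_K\cV_r=\ker(\d{g}_K)$. In finite dimensions, $\ker(\d{g}_K)^\perp={\rm ran}\big((\d{g}_K)^*\big)$. By Eq.~(\ref{eq:dgKadjoint}), this range is $\{2(K_iH)_i : H\in{\rm Herm}(\C^{d_1})\}$. Therefore (i) holds if and only if there exists a Hermitian $\Lambda$ with $2\cY(K_i)=2K_i\Lambda$ for all $i$, which is (ii).

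\textbf{Step 3: check the ingredients.} The two directions follow together from this equivalence. The only points needing care are the following.
\begin{itemize}
\item[$\circ$] The real inner product on $V_r$ and the Hilbert--Schmidt inner product on ${\rm Herm}(\C^{d_1})$ must be the ones with respect to which Eq.~(\ref{eq:dgKadjoint}) was derived. This can be re-verified directly: $\langle H,\d{g}_K(\dot K)\rangle=2\sum_i\Real\,\tr{HK_i^*\dot K_i}=\langle 2(K_iH)_i,\dot K\rangle_\R$.
\item[$\circ$] The factor $2$ must be tracked consistently on both sides of the comparison in Step 2.
\item[$\circ$] The identity $\ker(A)^\perp={\rm ran}(A^*)$ is applied to real linear maps between real inner product spaces, so it applies verbatim.
\end{itemize}

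I expect no real obstacle. The step most prone to error is the sign and conjugation bookkeeping in Step 1, namely confirming that $\langle K_i|Y|\dot K_i\rangle$ is the complex conjugate of $\langle\dot K_i|Y|K_i\rangle$, which uses the self-adjointness of $Y$ (equivalently, of $\cY$).
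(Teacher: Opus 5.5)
Your proposal is correct and follows essentially the same route as the paper. Like the paper, you rewrite the pairing as $2\sum_i\langle\cY(K_i),\dot K_i\rangle_\R$ and then invoke $\ker(\d{g}_K)^\perp={\rm im}\big((\d{g}_K)^*\big)$ together with the explicit adjoint in Eq.~\eqref{eq:dgKadjoint}; the only cosmetic difference is that you get both implications at once from this identity, whereas the paper checks (ii)$\Rightarrow$(i) separately via $2\sum_i\langle K_i\Lambda,\dot K_i\rangle_\R=\tr{\Lambda\,\d{g}_K(\dot K)}$.
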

\begin{proof}
    $(i)\Rightarrow (ii):$ First note that we can rewrite the expression by using that 
\begin{equation}
    \langle Y,\d{C}_K(\dot{K})\rangle_\R
    =
2\sum_{i=1}^r\langle\cY(K_i),\dot{K}_i\rangle_\R,
    \qquad
    \forall \dot{K}\in T_K\cV_r=\ker(\d g_K).
    \label{eq:Lem7}
\end{equation}
Set
$ Z:=\bigl(\cY(K_i)\bigr)_{i=1}^r .$
By the assumption and Eq.~\eqref{eq:Lem7}, we have
$$
    \langle Z,\dot K\rangle_\R=0
    \qquad
    \forall \dot K\in\ker(\d g_K),
$$
Hence
$ 
    Z\in\ker(\d g_K)^\perp.
$
However,
$$
    \ker(\d g_K)^\perp
    =
    {\rm im}\big((\d g_K)^*\big)
    =
    \{2(K_iH)_{i=1}^r\mid H\in{\rm Herm}(\C^{d_1})\}
$$
by Eq.~\eqref{eq:dgKadjoint}. Therefore there exists
$H\in{\rm Herm}(\C^{d_1})$ such that for all $i$: 
$\cY(K_i)=2K_iH$.
Setting $\Lambda:=2H$ then gives 
$\cY(K_i)=K_i\Lambda$,  which proves $(ii)$.
    
    $(ii)\Rightarrow (i):$ This follows from Eq.(\ref{eq:Lem7}) together with the fact that
     $$
        2\sum_{i=1}^r\langle\cY(K_i),\dot{K_i}\rangle_\R =2\sum_{i=1}^r \langle K_i\Lambda,\dot{K_i}\rangle_\R= \tr{\Lambda \d{g}_K(\dot{K})},
     $$ which vanishes for all $\dot{K}\in\ker(\d{g}_K)$.
\end{proof}
\subsection*{Edge relation} The last preparatory piece is a relation between  pairs of elements of $\cT$ for which the following Lemma provides equivalent characterizations:
\begin{lemma}[Edge relation]\label{lem:edge} Let $K\in\cV_r,K'\in\cV_{r'}$ be two tuples of Kraus operators corresponding to maps $T,T'\in\cT$, respectively. Then the following are equivalent and depend solely on the support spaces of the two corresponding Choi matrices:
\begin{enumerate}[(i)]
    \item ${\rm span}_\C\{K_i^*K'_j\}_{i,j}=\C^{d_1\times d_1}$.
    \item $T^*\circ T'$ has maximal Kraus rank  (i.e., $d_1^2$ lin. indep. Kraus operators or, equivalently, a full rank Choi matrix).
\end{enumerate}
\end{lemma}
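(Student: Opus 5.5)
The plan is to compute the Kraus operators of $T^*\circ T'$ directly. This gives (i)$\Leftrightarrow$(ii) at once. Invariance under unitary changes of Kraus representation then shows that the condition depends only on the support spaces.

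For the Kraus operators, note that if $T(X)=\sum_i K_iXK_i^*$, then $T^*(Y)=\sum_i K_i^*YK_i$. Hence
\[
T^*\circ T'(X)=\sum_{i,j} (K_i^*K'_j)\,X\,(K_i^*K'_j)^*,
\]
so $\{K_i^*K'_j\}_{i,j}$ is a family of Kraus operators for $T^*\circ T'$, a map on $\C^{d_1\times d_1}$. The Kraus rank of a cp map equals the dimension of the complex span of any of its Kraus families. This holds because the Choi matrix equals $\sum_{i,j}|K_i^*K'_j\rangle\langle K_i^*K'_j|$, and its range is the span of the vectors $|K_i^*K'_j\rangle$; vectorization $A\mapsto|A\rangle$ is a linear isomorphism. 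The Choi matrix has size $d_1^2\times d_1^2$. So it has full rank iff the span of the $K_i^*K'_j$ is all of $\C^{d_1\times d_1}$. This proves (i)$\Leftrightarrow$(ii).

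For the dependence on the supports, observe that ${\rm supp}\,C(K)={\rm span}_\C\{|K_i\rangle\}$, and similarly for $K'$. Let $S:={\rm span}_\C\{K_i\}$ and $S':={\rm span}_\C\{K'_j\}$. These spaces are in bijection with the support spaces via vectorization. The span of the products $K_i^*K'_j$ is bilinear (sesquilinear) in the pairs, so it equals ${\rm span}_\C\{A^*B\mid A\in S,\,B\in S'\}$. This span depends only on $S$ and $S'$, hence only on the support spaces of the two Choi matrices. In particular it is independent of the chosen Kraus representations and of $r,r'$.

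There is no real obstacle here. The only point needing care is the identification of Kraus rank with the dimension of the span of an arbitrary (possibly linearly dependent) Kraus family, and that follows immediately from the range formula for the Choi matrix.
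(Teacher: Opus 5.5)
Your proof is correct and follows essentially the same route as the paper. You identify $\{K_i^*K'_j\}_{i,j}$ as a Kraus family of $T^*\circ T'$ to get (i)$\Leftrightarrow$(ii), and you observe that the span of these products depends only on ${\rm span}_\C\{K_i\}$ and ${\rm span}_\C\{K'_j\}$, which correspond to the Choi supports under vectorization; you simply make explicit two steps the paper leaves implicit (Kraus rank equals the dimension of the span of any Kraus family, and the sesquilinear expansion giving ${\rm span}_\C\{A^*B\mid A\in S,\,B\in S'\}$).
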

\begin{proof}
    $(i)\Leftrightarrow (ii)$ is evident from the fact that a possible set of Kraus operators of $T^*\circ T'$ is precisely $\{K_i^*K'_j\}_{i,j}$. 

    That $(i)$ depends only on the support spaces of the Choi matrices becomes clear by noting that the support space of the Choi matrix $C(K)$, for instance, is given by ${\rm span}_\C\{|K_i\rangle \}_i$, which is linearly isomorphic to ${\rm span}_\C\{K_i\}_i$. Hence, the support spaces of the two Choi matrices determine ${\rm span}_\C\{K_i^*K'_j\}_{i,j}$. 
\end{proof}
\subsection{Structurally robust decompositions}\label{sec:robust}
\begin{theorem}[Structurally robust decompositions]\label{thm:robustness}
    For any $m$-tuple of maps $(T^{(k)}\in\cT)_{k=1}^m$  consider a graph $G$ that assigns a vertex to each of the $m$ maps and connects two vertices by an edge if the  relation in Lemma \ref{lem:edge} holds for the corresponding pair of maps. 
    
    If $G$ is connected, then for any convex combination $T=\sum_{k=1}^m\lambda_k T^{(k)}$ with full Kraus rank  and weights $\lambda_k\in(0,1)$ there exists an open neighborhood $U$ of $T$ in $\cT$ s.t. for every $\tilde{T}\in U$ there exist $(\tilde{T}^{(k)}\in\cT)_{k=1}^m$ where each $\tilde{T}^{(k)}$ has the same Kraus rank  as $T^{(k)}$ and $\tilde{T}=\sum_{k=1}^m\lambda_k \tilde{T}^{(k)}$.
\end{theorem}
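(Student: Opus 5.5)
The plan is to realize the convex combination as the value of a smooth map on a product of Kraus operator manifolds, and to show that this map is a submersion at the given point.

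Let $r_k$ be the Kraus rank of $T^{(k)}$. Choose linearly independent Kraus operators $K^{(k)}\in\cV_{r_k}$ for $T^{(k)}$. Define
\[
F:\cV_{r_1}\times\cdots\times\cV_{r_m}\rightarrow\cA,\qquad F(K^{(1)},\ldots,K^{(m)}):=\sum_{k=1}^m\lambda_k C(K^{(k)}).
\]
This map is well defined because ${\rm tr}_2[C(K^{(k)})]=I^T$ and $\sum_k\lambda_k=1$. Its differential is $\d{F}=\sum_k\lambda_k\,\d{C}_{K^{(k)}}$, acting on $\bigoplus_k T_{K^{(k)}}\cV_{r_k}$ with values in $\ker({\rm tr}_2)$.

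If $\d{F}$ is surjective at $K:=(K^{(1)},\ldots,K^{(m)})$, the local submersion theorem makes $F$ open near $K$. So $F$ maps a small neighborhood $O$ of $K$ onto a neighborhood of $C(T)$ in $\cA$. Because $T$ has full Kraus rank, $C(T)>0$, so a small neighborhood of $C(T)$ in $\cA$ consists of Choi matrices of elements of $\cT$; this gives the desired neighborhood $U$. For $\tilde K\in O$ set $\tilde T^{(k)}:=C^{-1}$-image of $\tilde K^{(k)}$, i.e. the map with Kraus operators $\tilde K^{(k)}$. It lies in $\cT$ and has Kraus rank at most $r_k$, since it has $r_k$ Kraus operators. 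It also has Kraus rank at least $r_k$, because linear independence of $r_k$ matrices is an open condition and we may shrink $O$. Hence the Kraus ranks are preserved.

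The core step is surjectivity of $\d{F}$, which I would prove by a Lagrange-multiplier argument. Take $Y\in\ker({\rm tr}_2)$ orthogonal to the range of $\d{F}$. Varying one block at a time shows that $Y$ is orthogonal to the range of each $\d{C}_{K^{(k)}}$, since $\lambda_k>0$. Lemma \ref{lem:cokernel} then gives $\Lambda_k\in{\rm Herm}(\C^{d_1})$ with $\cY(K^{(k)}_i)=K^{(k)}_i\Lambda_k$ for all $i$.

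Now take an edge $\{k,l\}$ of $G$. Self-adjointness of $\cY$ gives, for all $i,j$,
\[
\tr{\Lambda_k K_i^{(k)*}K_j^{(l)}}=\langle \cY(K^{(k)}_i),K^{(l)}_j\rangle=\langle K^{(k)}_i,\cY(K^{(l)}_j)\rangle=\tr{K_i^{(k)*}K_j^{(l)}\Lambda_l}.
\]
Here $\langle A,B\rangle$ denotes the complex trace inner product $\tr{A^*B}$. By Lemma \ref{lem:edge}(i), the products $K_i^{(k)*}K_j^{(l)}$ span $\C^{d_1\times d_1}$. Hence $\tr{X(\Lambda_k-\Lambda_l)}=0$ for all $X$, so $\Lambda_k=\Lambda_l$. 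Connectedness of $G$ then yields a common $\Lambda$.

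It follows that $\cY(A)=A\Lambda$ for all $A$ in the span of all the $K^{(k)}_i$. Via vectorization, this span is the support of $C(T)$, which is the whole space because $T$ has full Kraus rank. Right multiplication by $\Lambda$ corresponds to $Y=\Lambda^T\otimes\1$. Then ${\rm tr}_2[Y]=d_2\Lambda^T=0$ forces $\Lambda=0$, hence $Y=0$.

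I expect this surjectivity step to be the main obstacle, in particular getting the edge identity with the correct adjoint and transpose conventions. The rest is standard differential topology.
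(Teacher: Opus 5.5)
Your proposal is correct and follows essentially the same route as the paper. You show that $F(\mathsf K)=\sum_k\lambda_k C(K^{(k)})$ is a submersion: you take $Y\perp{\rm im}(\d{F})$, extract multipliers $\Lambda_k$ from Lemma~\ref{lem:cokernel}, equate them along edges using self-adjointness of $\cY$, and use full Kraus rank to get $Y=\Lambda^T\otimes\1$, which the partial-trace condition kills. Your extra remark that linear independence of the $\tilde K^{(k)}_i$ is an open condition makes explicit a point the paper leaves implicit: the Kraus ranks are preserved exactly, not merely bounded by $r_k$.
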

\begin{proof}
    Denote by $r_k$ the Kraus rank  of $T^{(k)}$ and let $\mathsf{K}:=(K^{(1)},\ldots,K^{(m)})\in\cV_{r_1}\times\ldots\times\cV_{r_m}=:\cV$ be  tuples of Kraus operators so that $K^{(k)}$ parametrizes $T^{(k)}$. The goal is to show that the map
    \begin{equation}
        F:\cV\rightarrow\cA,\quad F(\mathsf{K}):=\sum_{k=1}^m \lambda_k C(K^{(k)})\label{eq:F}
    \end{equation}
    is locally open around $\mathsf{K}$. By the submersion theorem it suffices to show that the differential $\d{F}_\mathsf{K}$ is surjective. Due to the linear relation between $F$ and $C$ this is equivalent to showing that
    \begin{equation}
        \forall k\in\{1,\ldots,m\}\ \forall \dot{K}\in T_{K^{(k)}}\cV_{r_k}:\qquad \langle Y,\d{C}_{K^{(k)}}(\dot{K})\rangle=0\label{eq:q2fe33}
    \end{equation}
    implies $Y=0$. From Lemma \ref{lem:cokernel} we know that Eq.(\ref{eq:q2fe33}) is equivalent to the existence of a set of Hermitian $\Lambda^{(k)}$ s.t. $\cY(K_i^{(k)})=K_i^{(k)}\Lambda^{(k)}$ holds for all $i,k$. 

    Suppose, as we will verify later, that $\Lambda^{(k)}=\Lambda$ is independent of $k$. Then $\cY(A)=A\Lambda$ holds for all $A\in{\rm span}_\C\{K_i^{(k)}\}_{i,k}$. As $T$ is assumed to have full Kraus rank , the span of all Kraus operators is the entire matrix space. So $A$ ranges all over $\C^{d_2\times d_1}$. By the definition of $\cY$, the equation can be rewritten as $Y(\1\otimes A)|\Omega\rangle=(\1\otimes A\Lambda)|\Omega\rangle=(\Lambda^T\otimes A)|\Omega\rangle$, and since this has to hold for all $A$, it simplifies to $Y=\Lambda^T\otimes\1$. However, $Y\in\ker({\rm tr}_2)$ which means that ${\rm tr}_2[Y]=\Lambda^T d_2=0$, which in turn implies $\Lambda=0$ and finally $Y=0$, as desired.

    It remains to show that $\Lambda^{(k)}$ is indeed independent of $k$. To this end, we exploit self-adjointness of $\cY$ and write
    \begin{eqnarray}
        0&=& \tr{\cY\big(K_i^{(k)}\big)^* K_j^{(l)}}-\tr{\cY\big(K_j^{(l)}\big) K_i^{(k)*}}\nonumber\\ \nonumber
        &=& \tr{\big(K_i^{(k)}\Lambda^{(k)}\big)^* K_j^{(l)}}-\tr{\big(K_j^{(l)}\Lambda^{(l)}\big) K_i^{(k)*}}\\
         &=& \tr{\big(\Lambda^{(k)}-\Lambda^{(l)}\big)K_i^{(k)*}K_j^{(l)}}.\label{eq:arrayedge}
    \end{eqnarray}
    If the edge relation from Lemma \ref{lem:edge} holds for the pair $T^{(k)},T^{(l)}$, then Eq.(\ref{eq:arrayedge}) implies that $\Lambda^{(k)}=\Lambda^{(l)}$. In this way, the assumed connectedness of the graph $G$ finally guarantees that all $\Lambda^{(k)}$ are equal.
\end{proof}
\subsection*{Example:} In order to show that the above assumptions are nonvacuous, we will construct an example for $r_k=d_1, m=d_2$. To this end, let $S\in\C^{d_2\times d_2}$ implement a cyclic shift of the standard basis, i.e., $S|k\rangle=|k+1\rangle$, cyclically, and let $\psi_0\in\C^{d_2}$ be any unit vector s.t. $\langle\psi_0,S\psi_0\rangle\neq 0$. For $k=0,\ldots,d_2-1$ define $\psi_k:=S^k\psi_0$ and \begin{equation}
    K_j^{(k)}:=|\psi_k\rangle\langle j|\sqrt{I},\label{eq.Kex1}
\end{equation}
 where $\{|j\rangle\}$ is an orthonormal basis of $\C^{d_1}$. By construction, all these $K^{(k)}$ parametrize an element $T^{(k)}\in\cT$, and since for a fixed $k$, the $d_1$ operators in Eq.(\ref{eq.Kex1}) are linearly independent, $T^{(k)}$ has Kraus rank  $d_1$. In order to verify the connectedness of the associated graph $G$, we exploit that $\langle\psi_k,\psi_{k+1}\rangle=\langle\psi_0,S^{-k}S^{k+1}\psi_{0}\rangle=\langle\psi_0,S\psi_{0}\rangle=:c$. So, by invertibility of $I$, for every $k$ $$K_j^{(k)*}K_i^{(k+1)}=c\sqrt{I}|j\rangle\langle i|\sqrt{I}$$ spans all of $\C^{d_1\times d_1}$ when varying $i$ and $j$. Therefore, $G$ contains a Hamiltonian cycle and is thus connected. Finally, we need to ensure that $T$ has full Kraus-rank. To this end, note that $\dim{\rm span}\{K_j^{(k)}\}=d_1\cdot\dim{\rm span}\{S^k\psi_0\}$. Hence, $T$ has full Kraus rank if we choose $\psi_0$ s.t. $\{S^k\psi_0\}$ span the whole space, which is a generic condition.  \vspace*{5pt}

An immediate consequence of this result is that the strong RA conjecture holds in any dimension on a set of nonzero measure:

\begin{theorem}[Structure of the strong RA set]\label{thm:RAstructure} For any $d_1,d_2\in\N$,
    let $\tilde{\cT}$ be the set of cptp maps $\C^{d_1\times d_1}\rightarrow\C^{d_2\times d_2}$ that satisfy the strong RA conjecture. As a subset of the affine space of all trace-- and hermiticity-preserving linear maps,  $\tilde{\cT}$ is closed, connected, semialgebraic, and has nonempty interior. In particular, it has nonzero Lebesgue-measure.
\end{theorem}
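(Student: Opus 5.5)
The plan is to describe $\tilde{\cT}$ as the image of a nice parameter space under a simple map, and to get the nonempty interior from Thm.~\ref{thm:robustness} applied to the Example above.

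\textbf{Image description.} Set $I=\1$. Let $\cV_{d_1}\subseteq(\C^{d_2\times d_1})^{d_1}$ be the Kraus-operator manifold with $\sum_i K_i^*K_i=\1$. Stacking the $K_i$ turns such a tuple into an isometry $\C^{d_1}\to\C^{d_1d_2}$, so $\cV_{d_1}$ is a complex Stiefel manifold. It is therefore compact and connected, and it is a real algebraic set. The set $\overline{\cE}$ of generalized extreme points is exactly $C(\cV_{d_1})$. Consider the polynomial map
\[
\Phi:\cV_{d_1}^{\,d_2}\rightarrow\cA,\qquad \Phi\big(K^{(1)},\ldots,K^{(d_2)}\big):=\frac1{d_2}\sum_{k=1}^{d_2}C\big(K^{(k)}\big),
\]
which is the map $F$ from Eq.~(\ref{eq:F}) with all $\lambda_k=1/d_2$. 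By definition, $\tilde{\cT}=\Phi(\cV_{d_1}^{\,d_2})$, identifying channels with their Choi matrices through a linear isomorphism. Three of the four properties then follow directly.
\begin{enumerate}[(i)]
\item $\tilde{\cT}$ is a continuous image of a compact space, hence compact and in particular closed.
\item It is a continuous image of a connected space (a product of connected manifolds), hence connected.
\item It is the image of a real algebraic set under a polynomial map, hence semialgebraic by the Tarski--Seidenberg theorem.
\end{enumerate}

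\textbf{Nonempty interior.} I will use the Example following Thm.~\ref{thm:robustness}, with $I=\1$, $m=d_2$ and $r_k=d_1$. Choose $\psi_0$ so that $\langle\psi_0,S\psi_0\rangle\neq0$ and the vectors $\{S^k\psi_0\}_k$ span $\C^{d_2}$; both are generic conditions, so such a choice exists. Then the graph $G$ is connected and $T:=\frac1{d_2}\sum_k T^{(k)}$ has full Kraus rank. Thm.~\ref{thm:robustness} gives an open neighborhood $U$ of $T$ in the set of cptp maps with the following property: every $\tilde T\in U$ equals $\frac1{d_2}\sum_k\tilde T^{(k)}$ with each $\tilde T^{(k)}$ of Kraus rank $d_1$. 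Hence $U\subseteq\tilde{\cT}$.

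\textbf{Main obstacle.} The delicate point is upgrading ``open in $\cT$'' to ``open in the ambient affine space of trace- and hermiticity-preserving maps''. Full Kraus rank means the Choi matrix of $T$ is positive definite. Positive definiteness is an open condition, so a small ball around $T$ in that affine space consists entirely of cptp maps. Intersecting this ball with $U$ gives an open set of the ambient space contained in $\tilde{\cT}$.

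I also need to check that the submersion argument really yields openness of the image of $F$, and not merely local surjectivity onto something smaller. This holds because $\d F_{\mathsf K}$ is onto $T\cA=\ker({\rm tr}_2)$, which is exactly the tangent space of the ambient affine space. A set with nonempty interior has positive Lebesgue measure, which completes the argument.
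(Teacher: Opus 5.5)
Your proposal is correct and follows essentially the paper's route. The nonempty interior comes from the cyclic-shift Example combined with Thm.~\ref{thm:robustness}, while closedness, connectedness and semialgebraicity come from compactness and connectedness of the Kraus-rank-$\le d_1$ channels (parametrized by isometries) together with Tarski--Seidenberg. Your packaging of all three as properties of the image $\Phi(\cV_{d_1}^{\,d_2})$, and your explicit check that the neighborhood is open in the ambient affine space (already guaranteed because $\d{F}_{\mathsf{K}}$ maps onto $\ker({\rm tr}_2)$), are just a slightly cleaner write-up of the same argument.
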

\begin{proof}
    That $\tilde{\cT}$ has nonempty interior follows from the fact that the foregoing example satisfies the strong RA conjecture (when all weights are chosen equal) and that it has a neighborhood with the same property by Thm.\ref{thm:robustness}.

    That $\tilde{\cT}$ is semialgebraic is a simple consequence of the Tarski-Seidenberg theorem since $\tilde{\cT}$ can be expressed in first-order logic over $\R$ using only polynomial (in-)equalities. 

    Closedness of $\tilde{\cT}$ follows from compactness of the set $\cT_{d_1}\subseteq\cT$ of cptp maps with Kraus rank at most $d_1$. In order to see this, consider a sequence $\big(T^{(n)}\in\tilde{\cT}\big)_{n\in\N}$ converging to $\lim_{n\rightarrow\infty} T^{(n)}=T\in\cT$. We want to show that $T\in\tilde{\cT}$. For each $T^{(n)}$ let $\big(T^{(n)}_1,\ldots, T^{(n)}_{d_2}\big)\in\cT_{d_1}^{d_2}$ be the elements in the corresponding equal weight RA decomposition. Since $\cT_{d_1}^{d_2}$ is compact, there is a subsequence $\big(T^{(n_m)}_i\big)_{i=1}^{d_2}$ converging in $\cT_{d_1}^{d_2}$ for $m\rightarrow\infty$. Then $\lim_{m\rightarrow\infty} T^{(n_m)}_i =T_i$ converges in $\cT_{d_1}$ so that 
    \begin{equation}
        T=\lim_{m\rightarrow\infty}\frac{1}{d_2}\sum_{i=1}^{d_2} T^{(n_m)}_i = \frac{1}{d_2}\sum_{i=1}^{d_2} T_i.
    \end{equation}
    Finally, connectedness follows from connectedness of $\cT_{d_1}$. The latter, in turn, can be seen by noting that by Stinespring's theorem every $\phi\in\cT_{d_1}$ can be represented as $\phi(X)={\rm tr}_R[VXV^*]$ by an isometry $V:\C^{d_1}\rightarrow\C^{d_2}\otimes\cH_R$, where $\cH_R\simeq\C^{d_1}$. In this way, the connectedness of the set of isometries leads to connectedness of $\cT_{d_1}$. 
\end{proof}

\subsection{Carathéodory number and barycentric decompositions}

Before we prove the strong RA conjecture for specific classes of quantum channels  in the following section, we add some small results on the Carathéodory number and the existence of barycentric decompositions for quantum channels. 

As mentioned earlier, Carathéodory's theorem guarantees that every cptp map $T:\C^{d_1\times d_1}\rightarrow\C^{d_2\times d_2}$ can be convexly decomposed into at most $(d_2^2-1)d_1^2+1$ extreme points. This bound can easily  be improved as follows:
\begin{theorem}[Carathéodory number bound]\label{thm:Cnr} Every cptp map $T:\C^{d_1\times d_1}\rightarrow\C^{d_2\times d_2}$ can be convexly decomposed into at most $r$ extreme points, where $r\leq d_1 d_2$ is the Kraus rank of $T$.    
\end{theorem}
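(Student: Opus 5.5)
Our approach is induction on the Kraus rank $r$, working directly with the Choi matrix and the face structure of the convex set of cptp maps. The key fact we use is the standard Choi characterization: a cptp map with Kraus operators $\{K_i\}_{i=1}^{s}$ is extreme iff the $s^2$ operators $\{K_i^*K_j\}_{i,j}$ are linearly independent. In particular, any map of Kraus rank $1$ is extreme. This settles the base case $r=1$.

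For the inductive step, let $T$ have Kraus rank $r$ with Kraus operators $K_1,\dots,K_r$. If $T$ is extreme, we are done. Otherwise the operators $K_i^*K_j$ are linearly dependent, so there is a nonzero $X\in\C^{r\times r}$ with $\sum_{ij}X_{ij}K_i^*K_j=0$. Since the dependence relation is closed under adjoints, we may take $X$ Hermitian. Define $T_t$ via the Kraus-type expression
\[
T_t(\rho)=\sum_{ij}(\1+tX)_{ij}K_i\rho K_j^*.
\]
Then $T_t$ is trace preserving for all real $t$, because $T_t^*(\1)=\sum_{ij}(\1+tX)_{ij}K_i^*K_j=\1$. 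It is completely positive as long as $\1+tX\ge0$, and its Choi matrix is the compression of $\1+tX$ to the $r$-dimensional support of $C(T)$.

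Now choose $t_+>0$ and $t_-<0$ maximal such that $\1+tX\ge0$ on $[t_-,t_+]$. Both endpoints are finite: if they were not, $X$ would be semidefinite, and then taking the trace of $\sum X_{ij}K_i^*K_j=0$ gives $\tr{X\,G}=0$ for the positive definite Gram matrix $G_{ij}=\tr{K_i^*K_j}$. This would force $X=0$, a contradiction. So $\1+tX$ is singular at each endpoint, and $T_{t_\pm}$ are cptp maps of Kraus rank at most $r-1$.

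Moreover, $T=T_0$ is a convex combination of $T_{t_+}$ and $T_{t_-}$, because $t\mapsto T_t$ is affine. Next we apply the induction hypothesis to each endpoint map. Naively this yields at most $2(r-1)$ extreme points, which is too many. The main obstacle is therefore getting exactly $r$ rather than the doubling bound.

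To fix this, we use only one endpoint plus a face argument. All $T_t$ have Choi matrices supported in the fixed subspace $\mathrm{supp}\,C(T)$, so they lie in the face $F$ of channels whose Choi support is contained in that $r$-dimensional subspace. This face is affinely isomorphic to $\{A\in\C^{r\times r},\,A\ge0,\,\sum A_{ij}K_i^*K_j=\1\}$, a compact convex set. Its boundary relative to its affine hull consists of maps of Kraus rank $\le r-1$.

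The argument then runs as follows. Take an extreme point $E$ of $F$; this exists by Krein--Milman, and $E$ is also extreme in the full channel set since $F$ is a face. If $T\ne E$, extend the ray from $E$ through $T$ until it hits the relative boundary of $F$ at a point $T'$ of rank $\le r-1$. Then $T=\lambda E+(1-\lambda)T'$. By induction, $T'$ is a convex combination of at most $r-1$ extreme points, so $T$ is a convex combination of at most $r$ extreme points.

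The remaining points to check are the following. We need $T$ itself to lie in the relative interior of $F$, which holds since its Choi matrix is full rank on the subspace. We need the ray to actually exit $F$, which holds by compactness. Finally, the dependence-based construction above is only needed to show that $F$ is not a single point unless $T$ is extreme; in the latter case we are already done. Since $r\le d_1d_2$ trivially, this proves the theorem.
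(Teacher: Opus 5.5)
Your argument is correct and is essentially the paper's proof. Take an extreme point $E$ of the face of channels whose Choi matrix is supported in ${\rm supp}\,C(T)$, and push along the ray from $E$ through $T$ until the Choi matrix becomes singular. This is exactly the paper's choice $\lambda_{max}=\max\{\lambda\,|\,P_T-\lambda P_\phi\geq 0\}$ with $P_\Psi=(P_T-\lambda_{max}P_\phi)/(1-\lambda_{max})$ of Kraus rank at most $r-1$, and your induction is the paper's iteration on the Kraus rank. The preliminary two-endpoint construction with the dependence matrix $X$ is not needed for the final argument: non-extremality of $T$ already shows that the face is not a single point.
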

\emph{Remark:} For $d_1=d_2$ this bound can already be found in \cite{loewy2016facessetquantumchannels} and it can also be seen as a consequence of the facial chain length bound in \cite{Chainlength}.
\begin{proof}
    Let $\phi\in\cT$ be an extreme point that appears with nonzero weight in a convex decomposition of $T$, and let $P_\phi, P_T\in{\rm Herm}(\C^d_1\otimes\C^{d_2})$ be the  Choi matrices corresponding to $\phi$ and $T$, respectively. Then $\lambda_{max}:=\max\{\lambda\in\R|P_T-\lambda P_\phi\geq 0\}$ is in the interval $(0,1)$, and $P_\Psi:=(P_T-\lambda_{max} P_{\phi})/(1-\lambda_{max})$ is a valid Choi matrix, which characterizes a cptp map $\Psi$ of Kraus rank at most $r-1$. 
    Hence, we have $$T=\lambda_{max}\phi+(1-\lambda_{max})\Psi,$$ which is a convex decomposition of a Kraus rank $r$ channel into an extreme point and a Kraus rank $\leq r-1$ channel. Iterating this successively reduces the Kraus rank while adding another extreme point in every step.  
\end{proof}
Next we show that a finite barycentric convex decomposition of a quantum channel is always possible in terms of generalized extreme points, but not always in terms of extreme points:
\begin{theorem}[Barycentric decompositions]\label{thm:bary} 
     Let $\cT$ be the convex set of cptp maps $T:\C^{d_1\times d_1}\rightarrow\C^{d_2\times d_2}$, $\cE\subseteq\cT$ the set of extreme points and $\overline{\cE}$ its topological closure.
     \begin{enumerate}
         \item Every $T\in\cT$ admits a barycentric decomposition into elements of $\overline{\cE}$. That is, there are $T_1,\ldots,T_n\in\overline{\cE}$ for some $n\in\N$, s.t.
         \begin{equation}\label{eq:baryc}
             T=\frac1n\sum_{i=1}^n T_i.
         \end{equation}
         Here, one can always choose $n\leq 2^{\max\{0,r-d_1\}}$, where $r\leq d_1 d_2$ is the Kraus rank of $T$. 
         \item If $d_1\geq 2, d_2\geq 3$, there are $T\in\cT$ that do not admit a barycentric decomposition of the form in Eq.(\ref{eq:baryc}) with $T_i\in\cE$ for any $n\in\N$.
     \end{enumerate}
\end{theorem}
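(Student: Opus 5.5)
For part (1) I would argue by induction on the Kraus rank $r$ of $T$, halving the channel and lowering the rank by one at each step. If $r\leq d_1$, then $T\in\overline{\cE}$ and $n=1$ suffices. If $r>d_1$, let $P$ be the Choi matrix of $T$ and $\cH_P={\rm supp}(P)$, which has dimension $r$. Consider the real vector space
\[
L:=\{X\in{\rm Herm}(\C^{d_1}\otimes\C^{d_2}) \mid {\rm supp}(X)\subseteq\cH_P,\ {\rm tr}_2[X]=0\}.
\]
Since ${\rm tr}_2$ imposes at most $d_1^2$ real conditions, $\dim L\geq r^2-d_1^2\geq 2r-1\geq 3$.

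For $X$ in the unit sphere $S(L)$, define
\[
t_+(X):=\max\{t\geq 0 \mid P+tX\geq 0\}.
\]
This number is finite and positive. It is positive because $P$ is positive definite on $\cH_P$. It is finite because ${\rm tr}[X]=0$ and $X\neq 0$, so $X$ has a negative eigenvalue. Explicitly, $t_+(X)=1/\lambda_{\max}\big(-P^{-1/2}XP^{-1/2}\big)$, with $P^{-1/2}$ taken on $\cH_P$, which shows that $t_+$ is continuous on $S(L)$.

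The map $h(X):=t_+(X)-t_+(-X)$ is continuous and odd on the sphere $S(L)$, which is connected because $\dim L\geq 2$. An odd continuous map on a connected set takes both signs unless it vanishes identically, so by the intermediate value theorem there is an $X$ with $h(X)=0$. For this $X$ and $t:=t_+(X)=t_+(-X)$ we get
\[
P=\tfrac12\big((P+tX)+(P-tX)\big).
\]
Both summands are Choi matrices of channels: they are positive and their ${\rm tr}_2$ equals that of $P$. Both are singular on $\cH_P$ by maximality of $t$, so each has rank at most $r-1$.

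Applying the induction hypothesis to both halves gives barycentric decompositions into at most $2^{r-1-d_1}$ elements each. Repeating elements, which is allowed, equalises the two numbers of terms. Concatenating the two lists then yields a barycentric decomposition of $T$ into at most $2^{r-d_1}$ generalized extreme points, which proves (1).

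For part (2), I would exploit the observation that every term of a convex decomposition of $T$ lies in the minimal face $F_T$ of $T$, i.e.\ in the set of channels whose Choi support lies in $\cH_P$. A barycentric decomposition uses rational weights. The plan is therefore to find, for $d_1=2$, $d_2=3$, a face $F$ with two properties:
\begin{itemize}
\item[(a)] $F\cap\cE$ is finite, or at least contained in finitely many affinely independent points;
\item[(b)] $F$ contains generalized but non-extreme points, which is where $d_2\geq 3$ enters.
\end{itemize}
One then picks $T\in F$ whose barycentric coordinates with respect to $F\cap\cE$ are irrational. Any decomposition $T=\frac1n\sum_i T_i$ with $T_i\in\cE$ would express $T$ with rational coordinates, which is a contradiction. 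The higher-dimensional cases should follow by embedding, padding inputs and outputs with orthogonal blocks in a way that preserves extremality.

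As a concrete candidate for $F$ I would use the face generated by channels with Kraus operators of the form $|a\rangle\langle j|$ and small superpositions of them. Extremality is governed by linear independence of $\{K_i^*K_j\}$ (Choi's criterion \cite{CHOICP}). The aim is to arrange the Kraus span so that only a few rank-$d_1$ elements of the face satisfy this criterion, while a continuum of rank-$d_1$ elements violate it.

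I expect this construction to be the main obstacle. Controlling exactly which rank-$\leq d_1$ points of the face are genuinely extreme, and showing that they are finitely many and affinely independent, requires a careful analysis of Choi's linear-independence condition on a specific Kraus span. By contrast, part (1) is a routine induction once the odd-function argument is in place.
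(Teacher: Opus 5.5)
Your part (1) is correct and follows the paper's argument, with more detail. The paper works in the face $F_P$ of Choi matrices supported in ${\rm supp}(P)$ with the same ${\rm tr}_2$. It notes $\dim F_P\geq r^2-d_1^2\geq 3$ and invokes the fact that a relative-interior point of a compact convex set of dimension at least two is the midpoint of two relative-boundary points. Your odd function $t_+(X)-t_+(-X)$ on the connected sphere $S(L)$ is exactly a proof of that fact. One small fix: run the induction with \emph{exactly} $2^{\max\{0,s-d_1\}}$ terms for Kraus rank $s$, padding with trivial splits $P=\tfrac12(P+P)$ as the paper does. With only ``at most'', the two term counts need not divide one another, and equalising them by repetition could overshoot $2^{r-1-d_1}$.

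Part (2), however, is not proved. You describe a strategy but never produce the face, and you flag the construction yourself as the main obstacle. The missing idea is that no point-by-point analysis of Choi's extremality criterion is needed. It suffices to pick $d_1$ linearly independent $K_1,\ldots,K_{d_1}\in\C^{d_2\times d_1}$ whose $d_1^2$ products $K_i^*K_j$ satisfy exactly one linear relation. The face $F$ of channels with Choi support in $\cK={\rm span}_\C\{|K_i\rangle\}$ consists of $J=\sum_{i,j}H_{ij}|K_i\rangle\langle K_j|$ with $H\geq 0$ and $\gamma(H):=\sum_{i,j}H_{ij}K_i^*K_j=\1$, so $\dim F\leq\dim\ker\gamma=1$. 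A compact one-dimensional convex set is a segment whose only extreme points are its two endpoints. These endpoints are extreme in $\cT$ because $F$ is a face, and any decomposition of an element of $F$ into extreme points of $\cT$ can use only these two. Your rationality argument then finishes the proof.

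The paper takes $K_1=|1\rangle\langle 1|+|2\rangle\langle 2|$, $K_2=|3\rangle\langle 1|+|1\rangle\langle 2|$ and $K_i=|1\rangle\langle i|$ for $3\leq i\leq d_1$. Then $K_1^*K_1=K_2^*K_2=|1\rangle\langle1|+|2\rangle\langle2|$ is the only coincidence, all other products are distinct matrix units, and so $\ker\gamma=\R\cdot{\rm diag}(1,-1,0,\ldots,0)$. The output vector $|3\rangle$ is where $d_2\geq 3$ enters. The family $\lambda|K_1\rangle\langle K_1|+(1-\lambda)|K_2\rangle\langle K_2|+\sum_{i\geq 3}|K_i\rangle\langle K_i|$, $\lambda\in[0,1]$, shows that $F$ is genuinely one-dimensional. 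This construction works directly for all $d_1\geq 2$, $d_2\geq 3$, so your unjustified embedding step for higher dimensions is also unnecessary.
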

\begin{proof}
    \emph{(1)} Let $P\in {\rm Herm}(\C^{d_1}\otimes\C^{d_2})$ be the Choi matrix of $T$ and assume that $r:=\rank{P}>d_1$ (otherwise, there is nothing to show). Define $F_P:=\{Q\geq 0| {\rm supp}[Q]\subseteq{\rm supp}(P),{\rm tr}_2[Q]={\rm tr}_2[P]\}$. This is a compact convex face within the set of all Choi matrices, with $P\in{\rm relint}[F_P]$ in its relative interior. The dimension of this face, i.e., the dimension of its affine hull, is $\dim_{\R}(F_P)\geq 3$. One way to see this is to note that $\dim_{\R}(F_P)=\dim_{\R}(\ker[\Gamma])$, where $\Gamma:{\rm Herm}({\rm supp}[P])\rightarrow {\rm Herm}(\C^{d_1}), \Gamma(X):={\rm tr}_2[X]$. Rank-nullity then gives $\dim_{\R}(\ker[\Gamma])\geq r^2-d_1^2\geq (d_1+1)^2-d_1^2=2d_1+1\geq 3$.
    Now we use the following elementary convex-geometric fact, which is a consequence of the intermediate value theorem: for any point $P$ in the relative interior of a convex compact set of dimension at least two, there are points $P_1,P_2$ on the relative boundary s.t. $P=(P_1+P_2)/2$. In our case, $P_1,P_2$ are Choi matrices of  $\rank{P_i}<r$. 

    Now we can iterate this argument. Applying it to each of $P_1,P_2$ reduces the rank further. Since the rank drops by at least one in every branch and we stop when no rank is larger than $d_1$, we need at most $r-d_1$ iterations, leading to at most $2^{r-d_1}$ terms in the decomposition. Note that once the rank drops to  $d_1$ or below in one branch, we can still decompose the corresponding term further in the trivial way $P=(P+P)/2$ in order to guarantee equal weights.

    \emph{(2)} The reason that in the case of $d_1\geq 2, d_2\geq 3$ a barycentric decomposition into extreme points does not always exist is that in this case $\cT$ has one-dimensional faces (which we will prove subsequently). Since every element in such a face has a unique decomposition into the two extreme points, every point in the relative interior whose decomposition into those extreme points involves irrational weights then does not admit a barycentric decomposition into extreme points. 

    Now, let us construct a one-dimensional face. Using orthonormal bases, define operators $K_1,\ldots,K_{d_1}:\C^{d_1}\rightarrow\C^{d_2}$ by $K_1:=|1\rangle\langle1|+|2\rangle\langle 2|$, $K_2:=|3\rangle\langle 1|+|1\rangle\langle 2|$, and, if $d_1>2$, then $K_i:=|1\rangle\langle i|$ for $i\in\{3,\ldots,d_1\}$. Consider the set $F:=\{P\in {\rm Herm}(\C^{d_1}\otimes\C^{d_2})|P\geq0,{\rm supp}(P)\subseteq\cK,{\rm tr}_2{P}=\1 \}$ of proper Choi matrices supported in $\cK:={\rm span}_{\C}\{|K_1\rangle,\ldots,|K_{d_1}\rangle\}$. This is clearly a face, since Choi matrices with support outside of $\cK$ can never appear in any convex decomposition of any element of $F$. Every element of $F$ is of the form $J=\sum_{i,j} H_{i,j}|K_i\rangle\langle K_j|$ for some Hermitian $H\in\C^{d_1\times d_1}$. $H$ is constrained by the positivity of $J$ and the trace-preserving constraint $\gamma(H)=\1$ where $\gamma$ is an endomorphism on ${\rm Herm}(\C^{d_1})$ given by $\gamma(H):=\sum_{i,j}H_{i,j}K_i^* K_j$. Consequently,  $\dim(F)\leq \dim(\ker(\gamma))$. In order to determine the kernel of $\gamma$ note that 
    \begin{eqnarray}
        K_1^*K_1=K_2^* K_2&=&|1\rangle\langle1|+|2\rangle\langle2|\nonumber \\
        K_1^*K_2=|1\rangle\langle2|&,& K_2^*K_1=|2\rangle\langle1|\\
        K_j^*K_k=|j\rangle\langle k|&& \text{for all }k\geq 3 \text{ and all }j.\nonumber
    \end{eqnarray}
Hence, there is only one linear relation and we obtain a one-dimensional kernel $\ker(\gamma)=\R\cdot {\rm diag}(1,-1,0,\ldots,0)$. That the dimension of $F$ is not reduced further by the positivity constraint is shown by the explicit one-dimensional family
$$\lambda |K_1\rangle\langle K_1|+(1-\lambda)|K_2\rangle\langle K_2|+\sum_{i\geq 3} |K_i\rangle\langle K_i|\ \in\  F,\quad\text{for all }\lambda\in[0,1].$$
    
\end{proof}
The case $d_1=d_2=2$, which is omitted in Thm.\ref{thm:bary} \emph{(2)}, is special: it follows from the results in \cite{loewy2016facessetquantumchannels} that in this case $\cT$ does not have one-dimensional faces. Consequently, the construction in the proof of Thm.\ref{thm:bary} \emph{(1)} still works with $T_i\in\cE$ so that every qubit channel does admit an equal-weight decomposition into extreme points.  

\section{Classes satisfying the strong RA conjecture}\label{sec:strongRAexp}
It has been shown in \cite{Szarek} that every quantum channel with a two-dimensional output ($d_2=2$) satisfies the strong RA conjecture, i.e. it can be written as the average of two channels with Kraus rank at most $d_1$. This is essentially a consequence of the fact that every contraction can be represented as an average of two unitaries. 

The following shows that any channel with a two-dimensional input also satisfies the strong RA conjecture. Note that this is a substantially different result as it concerns an equal-weight mixture of a large number of low-Kraus rank channels (rather than a small number with high Kraus-rank) — also, the proof is quite different:

\begin{theorem}[Strong RA for $d_1=2$]\label{thm:strong2RA}
    The strong RA conjecture holds for any cptp map $T:\C^{d_1\times d_1}\rightarrow \C^{d_2\times d_2}$ where $d_1=2$. That is, in this case, there are cptp maps $T_1,\ldots,T_{d_2}$, each of Kraus rank at most two, s.t. \begin{equation}
        T=\frac{1}{d_2}\sum_{i=1}^{d_2} T_i.\label{eq:sRATis}
    \end{equation}
\end{theorem}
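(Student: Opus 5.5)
The plan is an induction on the number of pieces, peeling off one Kraus-rank-two piece at a time. The topological route of Thm.~\ref{thm:equivP} is not enough here. With $r=2$ we need $k=4$ real constraints: the trace and the three traceless components of ${\rm tr}_2[P_i]\in{\rm Herm}(\C^2)$. Eq.~(\ref{eq:kineq}) allows this only for $n=2$, and the case $d_2=2$ is already covered by \cite{Szarek}.

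\emph{Reformulation.} Let $P$ be the Choi matrix of $T$, so ${\rm tr}_2[P]=\1_2$ and $\rank{P}\leq 2d_2$. I would prove the following stronger claim by induction on $m$. For every $P\geq0$ on $\C^2\otimes\C^{d_2}$ with ${\rm tr}_2[P]=\1$ and $\rank{P}\leq 2m$, there are $P_1,\ldots,P_m\geq0$ of rank $\leq2$ with $P=\sum_iP_i$ and ${\rm tr}_2[P_i]=\1/m$. For $m=d_2$ this gives Eq.~(\ref{eq:sRATis}) with $T_i$ the channel whose Choi matrix is $d_2P_i$. The case $m=1$ is trivial.

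\emph{Inductive step.} Use the parametrization from the proof of Lemma~\ref{lem:balancedequi}. Choose $V$ mapping ${\rm supp}(P)$ isometrically into $\C^{2m}$, and set $P_1:=\sqrt P V^*QV\sqrt P$ for a rank-two projector $Q$ on $\C^{2m}$. Then $P-P_1=\sqrt PV^*(\1-Q)V\sqrt P$ is positive with rank $\leq 2(m-1)$, so the rank drops automatically. Define the cp map $\Phi:X\mapsto{\rm tr}_2[\sqrt PV^*XV\sqrt P]$ from $\C^{2m\times 2m}$ to $\C^{2\times2}$, which satisfies $\Phi(\1)=\1_2$. If some rank-two projector $Q$ satisfies $\Phi(Q)=\1/m$, then $(P-P_1)\cdot\tfrac{m}{m-1}$ again satisfies the hypothesis with $m-1$, and induction closes.

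\emph{Key step and main obstacle.} The heart of the proof is the existence of a two-dimensional subspace $\mathcal S\subseteq\C^{2m}$ with $\Phi(Q_{\mathcal S})=\Phi(\1)/m$. Dually, we need $\tr{Q_{\mathcal S}A_\sigma}=\tfrac1m\tr{A_\sigma}$ for the four Hermitian operators $A_\sigma:=V\sqrt P(\sigma^T\otimes\1)\sqrt PV^*$, with $\sigma\in\{\1,\sigma_x,\sigma_y,\sigma_z\}$. The target point $\tfrac1m(\tr{A_\sigma})_\sigma$ lies in the convex hull of the joint $2$-numerical range $\{(\tr{QA_\sigma})_\sigma\}$, since it is the average over any orthonormal basis split into $m$ pairs. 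The problem is therefore a convexity question for the joint $k$-numerical range with $k=2$ of four Hermitian operators.

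I would first try to reduce the count by one. Impose only the three traceless constraints, i.e.\ demand $\Phi(Q)\propto\1$, and handle the trace separately. Consider the family of admissible $\mathcal S$; it contains subspaces with $\tr{\Phi(Q)}\geq 2/m$ and with $\tr{\Phi(Q)}\leq 2/m$, because the pairs from one basis average to $2/m$. This should follow from an averaging argument over a unitary orbit, or from a Schur--Horn-type argument in the spirit of Prop.~\ref{prop:SchurHorn}. Then I would use connectedness of this family and the intermediate value theorem to hit trace exactly $2/m$.

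For the three traceless constraints I would invoke convexity of the $k$-numerical range of three Hermitian matrices in dimension $\geq 3$, in the spirit of the Au-Yeung--Poon results. This leaves the case $m=2$, where $\C^{2m}=\C^4$ and $k=2$ is half the dimension; that case must be treated separately, e.g.\ via the unitary-average argument of \cite{Szarek}.

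The delicate point, and where the argument may need repair, is whether the set of subspaces with $\Phi(Q)\propto\1$ is connected. If it is not, I would instead aim directly for convexity of the four-operator joint $2$-numerical range in this special structure, using that the $A_\sigma$ are all compressions of a $2\times2$ tensor factor.
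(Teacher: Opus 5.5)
\textbf{There is a genuine gap: the one-step lemma carries all the difficulty, and you do not prove it.}

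Your induction scaffold is sound. If some rank-two projector $Q$ on $\C^{2m}$ satisfies $\Phi(Q)=\1/m$, then $P-\sqrt P V^*QV\sqrt P$, rescaled by $\tfrac{m}{m-1}$, meets the hypothesis for $m-1$. But all of the theorem's content sits in that one-step lemma, and it is not an easier problem than the theorem.

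\emph{There is no special structure to exploit.} Your inductive claim quantifies over all $P$ of rank $\le 2m$ on $\C^2\otimes\C^{d_2}$. For $d_2\ge 4m$, $\Phi$ can therefore be an arbitrary unital cp map $\C^{2m\times 2m}\to\C^{2\times 2}$. After shifting by multiples of $\1$ and rescaling, your $A_\sigma$ become four arbitrary traceless Hermitian $2m\times 2m$ matrices. So the structure your fallback relies on is absent, and the convexity theorems for joint $k$-numerical ranges that you cite cover only three matrices.

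\emph{The ``both sides'' claim fails.} The $m$ pairs of a fixed basis do average to ${\rm tr}\,\Phi(Q)=2/m$. But those pairs are not admissible: they do not satisfy $\Phi(Q)\propto\1$ unless you already have an orthogonal splitting of $\C^{2m}$ into $m$ admissible planes. Such a splitting is a three-constraint equipartition into $m$ rank-two pieces, i.e. essentially the weak form of the very statement you are proving. It is not supplied by Thm.~\ref{thm:equivP} when $m$ is an odd prime, since $r=2$ then only allows $k\le 2$.

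\emph{The remaining ingredients do not close it.} Three-matrix convexity yields one admissible plane but gives no control over its trace. Connectedness of the admissible set is, as you say yourself, open.

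\emph{A smaller slip at $m=2$.} The case $m=2$ is not covered by \cite{Szarek}, which concerns output dimension two, whereas your $d_2$ is arbitrary. It is, however, covered by Thm.~\ref{thm:equivP} with $n=r=2$, which allows $k\le 6$.

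\textbf{The missing idea is to optimize over all decompositions at once rather than build them piece by piece.}
\begin{itemize}
\item The paper parametrizes decompositions by $U\in U(2d_2)$ via $P_i=\sqrt P U^*(\1_2\otimes|i\rangle\langle i|)U\sqrt P$, and sets $a_i={\rm tr}_2[P_i]$.
\item It minimizes $f(U)=\sum_i\|a_i(U)-\1/d_2\|_2^2$ over this compact group, then shows that no unbalanced critical point is a local minimum.
\item Along $e^{tK}U$ with $K=Z\otimes|p\rangle\langle q|-Z^*\otimes|q\rangle\langle p|$ and $Z\in SU(2)$, one gets $\varphi''(0)=16\|H(Z)\|_2^2-4\|a_p-a_q\|_2^2$.
\item Criticality forces $H(Z)\perp(a_p-a_q)$. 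Borsuk--Ulam for the odd map $H:SU(2)\simeq S^3\to\R^3$ then gives a $Z$ with $H(Z)=0$, hence $\varphi''(0)<0$ whenever $a_p\neq a_q$.
\end{itemize}
This is exactly where $d_1=2$ enters.

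The same argument, run on $\C^{2m}$ with your $V$, proves your inductive claim for every $P$ of rank at most $2m$. So your one-step lemma is in fact true, but it comes out as a by-product of the global variational argument, not as a route to it.
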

\begin{proof}
    Let $P\in{\rm Herm}(\C^{d_1}\otimes\C^{d_2})$ be the Choi matrix of $T$, which implies ${\rm tr}_2[P]=\1$, and $P\geq 0$. W.l.o.g. we can assume that $P$ is, in fact, positive definite as the singular case will eventually be covered by compactness and continuity.

    On the level of Choi matrices, the decomposition in Eq.(\ref{eq:sRATis}) corresponds to a decomposition $P=\sum_{i=1}^{d_2} P_i$ where $P_i\geq 0$ and ${\rm tr}_2[P_i]=\1/d_2$ for all $i$. If we neglect the partial trace constraint, we can parametrize a family of decompositions (in fact, \emph{all} if $P>0$) by a single unitary $U\in U(d_1 d_2)$ via \begin{equation}
        P_i=\sqrt{P}U^*\underbrace{\big(\1_{d_1}\otimes|i\rangle\langle i|\big)}_{=:E_i} U \sqrt{P},\label{eq:PiU}
    \end{equation}
where $\{|i\rangle\in\C^{d_2}\}$ is an orthonormal basis. The aim is to show that there is a $U$ s.t. all $a_i(U):={\rm tr}_2[P_i]$ have the same value, which is then $a_i(U)=\1/d_2$. Hence, we seek a zero of the functional 
\begin{equation}
    f(U):=\sum_{i=1}^{d_2}\|a_i(U)-\1/d_2\|_2^2.
\end{equation}
Consider the variation $\varphi(t):=f\big(U(t)\big)$, $t\in\R$  with $U(t):=\exp{[tK]}U$, where we choose a skew-Hermitian generator $K:=Z\otimes|p\rangle\langle q|-Z^*\otimes |q\rangle\langle p|$, using the tensor product structure and basis that appears in the $E_i$'s, and $Z\in\C^{d_1\times d_1}$.
 A straightforward, albeit lengthy, calculation leads to the derivatives  
\begin{eqnarray}
    \varphi'(0)&=&4 \Real \sum_{i=1}^{d_2} \tr{\big(a_p(U)-a_q(U)\big)(Y_{pi})^* Z\; Y_{qi}}\label{eq:varhi10}, \\ \label{eq:varhi20} 
    \varphi''(0)&=& 16\Big\|\Real\sum_{i=1}^{d_2} (Y_{pi})^* Z\; Y_{qi}\Big\|_2^2 + \\  \nonumber && +4\tr{\big(a_p(U)-a_q(U)\big)\sum_{i=1}^{d_2}\big((Y_{qi})^* Z^*Z Y_{qi} - (Y_{pi})^* ZZ^* Y_{pi}\big)}\nonumber,
\end{eqnarray}
where we use the block-matrix decomposition $Y=\sum_{i,j=1}^{d_2} Y_{ij}\otimes |i\rangle \langle j|:=U\sqrt{P}$.

If we restrict to $Z\in SU(d_1)$, then using  $a_q(U)-a_p(U)=\sum_i (Y_{qi})^*Y_{qi}-(Y_{pi})^* Y_{pi}$ and setting $H(Z):=\Real\sum_i (Y_{pi})^* Z Y_{qi}$,   Eq.(\ref{eq:varhi20}) simplifies to
\begin{equation}\label{eq:phi20su}
    \varphi''(0)= 16 \big\|H(Z)\big\|_2^2-4\big\|a_p(U)-a_q(U)\big\|_2^2.
\end{equation}

Suppose $U$ is a stationary point of $f$. Then $\varphi'(0)=0$ in Eq.(\ref{eq:varhi10}) for all $Z\in\C^{d_1\times d_1}$, which implies that in ${\rm Herm}(\C^{d_1})$ the range of $H$ is in the orthogonal complement of $a_p(U)-a_q(U)$, i.e., $\tr{\big(a_p(U)-a_q(U)\big)H(Z)}=0$. Thus, for $d_1=2$, the range of $H$ lies within a vector space isomorphic to $\R^3$, while the domain of $H$ is homeomorphic to $S^3\simeq SU(2)$ (as for instance the first column vector of $Z$ becomes a unit vector in $\R^4$, when separating real and imaginary parts). Moreover, $H$ is odd in the sense that $H(-Z)=-H(Z)$. Hence we get an odd map $S^3\rightarrow\R^3$, which, by the Borsuk-Ulam theorem, must have a zero. That is, there is a $Z\in SU(2)$ s.t. $H(Z)=0$. Consequently, Eq.(\ref{eq:phi20su}) yields $\varphi''(0)<0$ unless $a_p(U)=a_q(U)$. In other words,
every non-balanced stationary point has a negative second variation along some admissible curve. Therefore, every local minimum is balanced. Since $U(d_1 d_2)$ is compact, $f$ attains a global minimum, and that global minimum must be balanced.

\end{proof}

Finally, we show that if either the input or the output of a quantum channel is classical, meaning that it consists only of diagonal density matrices, then the strong RA conjecture holds. 
\begin{theorem}[Strong RA for cq- and qc-channels]\label{thm:cqqc}
    For any $d_1, d_2\in\N$, the strong RA conjecture holds for any cptp map $T:\C^{d_1\times d_1}\rightarrow \C^{d_2\times d_2}$ that is either classical-quantum (cq) or quantum-classical (qc). 
\end{theorem}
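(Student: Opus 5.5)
The plan is to treat the two classes separately and in each case write down the $d_2$ summands $P_i$ of the Choi matrix $P$ explicitly. Each $P_i$ should satisfy $P_i\geq 0$, $\rank{P_i}\leq d_1$ and ${\rm tr}_2[P_i]=\1/d_2$. That is exactly the strong RA condition of Conj.~\ref{conj:RA1}, with $T_i$ the channel whose Choi matrix is $d_2P_i$. No topology should be needed here. For cq-channels the classical input index can be carried along passively. For qc-channels the classical output index lets us "rotate away" the output structure with a Fourier basis.

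\emph{cq-channels.} Write $T(X)=\sum_{j=1}^{d_1}\langle j|X|j\rangle\,\rho_j$ with density operators $\rho_j$ on $\C^{d_2}$. Its Choi matrix is $P=\sum_j |j\rangle\langle j|\otimes\rho_j$, up to the ordering convention for the transpose. Apply Prop.~\ref{prop:SchurHorn} with $n=d_2$ to each $\rho_j$ separately. This gives unit vectors $\varphi_{j1},\ldots,\varphi_{jd_2}$ with $\rho_j=\frac1{d_2}\sum_i|\varphi_{ji}\rangle\langle\varphi_{ji}|$. Then set
$$P_i:=\frac1{d_2}\sum_{j=1}^{d_1}|j\rangle\langle j|\otimes|\varphi_{ji}\rangle\langle\varphi_{ji}|.$$
These are positive with $\sum_iP_i=P$, and each has rank at most $d_1$, being a sum of $d_1$ rank-one terms. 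Since $\|\varphi_{ji}\|=1$, we get ${\rm tr}_2[P_i]=\frac1{d_2}\sum_j|j\rangle\langle j|=\1/d_2$. Each $T_i$ is then a cq-channel that sends $|j\rangle\langle j|$ to the pure state $\varphi_{ji}$.

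\emph{qc-channels.} Write $T(X)=\sum_{k=1}^{d_2}\tr{M_kX}\,|k\rangle\langle k|$ for a POVM $(M_k)$. Then $P=\sum_k M_k^T\otimes|k\rangle\langle k|$ and $\sqrt P=\sum_k\sqrt{M_k^T}\otimes|k\rangle\langle k|$. I would use the unitary parametrization from Eq.~(\ref{eq:PiU}), choosing $U=\1\otimes F$ with $F$ the discrete Fourier transform. Equivalently, set $P_i:=\sqrt P\,(\1\otimes|f_i\rangle\langle f_i|)\sqrt P$, where $f_i$ runs over the Fourier basis, so that $|\langle k|f_i\rangle|^2=1/d_2$ for all $i,k$. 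Then $\sum_iP_i=P$, each $P_i\geq0$ has rank at most $\rank{\1\otimes|f_i\rangle\langle f_i|}=d_1$, and
$${\rm tr}_2[P_i]=\sum_{k}|\langle k|f_i\rangle|^2\,M_k^T=\frac1{d_2}\sum_kM_k^T=\frac{\1}{d_2}.$$
The cross terms $k\neq k'$ drop out under ${\rm tr}_2$ because the $|k\rangle$ are orthonormal.

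The only step I expect to need care is bookkeeping. One must fix the Choi and transpose conventions of the paper, e.g.\ whether one gets $M_k$ or $M_k^T$, so that ${\rm tr}_2[P]=\1$ matches the cptp condition. One must also check that rank-deficient cases are allowed: rank at most $d_1$ suffices, and zero-probability terms cause no problem. The real idea is the choice of a mutually unbiased basis for the classical output index in the qc case, and Schur–Horn applied blockwise in the cq case.
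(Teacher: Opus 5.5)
Your proof is correct and follows essentially the same route as the paper. The cq case is the identical blockwise Schur--Horn argument. In the qc case, your $P_i=\sqrt P\,(\1\otimes|f_i\rangle\langle f_i|)\sqrt P$ with the Fourier basis is exactly $\tfrac{1}{d_2}$ times the Choi matrix of the paper's channel $T_m$, whose Kraus operators are $K_{m,\alpha}=\sum_j|j\rangle\langle\alpha|\sqrt{Q_j}\,e^{2\pi i(m-1)j/d_2}$. You simply write that decomposition at the Choi level instead of through Kraus operators.
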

\begin{proof}
    \emph{cq-channels:} $T$ is a cq-channel if there is an orthonormal basis $|i\rangle\in\C^{d_1}$ together with density matrices $\rho_1,\ldots,\rho_{d_1}\in\C^{d_2\times d_2}$, s.t. $T(\rho)=\sum_{i=1}^{d_1}\langle i|\rho|i\rangle \rho_i$. Employing Prop.\ref{prop:SchurHorn} (Schur-Horn decomposition), we can write each of them as $\rho_i=\tfrac{1}{d_2}\sum_{j=1}^{d_2}|\varphi_{ij}\rangle\langle\varphi_{ij}|$. Hence, 
    \begin{equation}
        T(\rho)=\frac{1}{d_2}\sum_{j=1}^{d_2}T_j(\rho), \quad\text{where}\quad T_j(\rho):=\sum_{i=1}^{d_1} \langle i|\rho|i\rangle |\varphi_{ij}\rangle\langle\varphi_{ij}|,
    \end{equation}
    so that each $T_j$ is a cptp map with Kraus rank  at most $d_1$.\vspace*{5pt}

    \emph{qc-channels:} $T$ is a qc-channel if there is an orthonormal basis $|i\rangle\in\C^{d_2}$ and positive semidefinite operators $Q_1,\ldots,Q_{d_2}\in\C^{d_1\times d_1}$ with $\sum_i Q_i=\1$, s.t. $T(\rho)=\sum_{i=1}^{d_2} \tr{\rho Q_i} |i\rangle\langle i|$. Let $|\alpha\rangle\in\C^{d_1}$ denote the elements of an orthonormal basis and define $K_{m,\alpha}:=\sum_{j=1}^{d_2} |j\rangle\langle\alpha|\sqrt{Q_j} e^{2\pi i(m-1)j/d_2}$ for $m\in\{1,\ldots, d_2\}$. Then each $T_m(\rho):=\sum_{\alpha=1}^{d_1}K_{m,\alpha}\rho K_{m,\alpha}^*$ defines a cptp map whose Kraus rank  is at most $d_1$, and
    \begin{eqnarray}
        \frac{1}{d_2}\sum_{m=1}^{d_2} T_m(\rho) &=& \frac{1}{d_2}\sum_{j,k=1}^{d_2} \tr{\sqrt{Q_j}\rho\sqrt{Q_k}}|j\rangle\langle k|\underbrace{\sum_{m=1}^{d_2}\exp\Big[\frac{2\pi i}{d_2}(m-1)(j-k)\Big]}_{=d_2\cdot\delta_{j,k}}\nonumber \\ \nonumber &=& \sum_{j=1}^{d_2}\tr{Q_j \rho}\;|j\rangle\langle j|=T(\rho).
    \end{eqnarray}
\end{proof}

\section*{Acknowledgments}
MMW was supported by the Swedish Research Council under grant no. 2021-06594 while in residence at Institut Mittag-Leffler in Djursholm, Sweden in April 2026.

%
%

\appendix

\section{Symmetric informationally complete POVMs}\label{sec:SICPOVM}

 The term ``SIC - POVM'' was coined in \cite{Renes_2004} as a set of $d^2$ normalized vectors in $\C^d$ that satisfy an equiangular condition. 
 \begin{definition*}[SIC-POVM]
     A SIC-POVM is a set $\{\vec{\psi_i}\}_{i=1}^{d^2},\ \vec{\psi_i}\in\C^d$ that satisfies $\abs{\dotp{\psi_i, \psi_j}}^2=\begin{cases}
         \frac{1}{d+1}&i\neq j\\
         1&i=j
     \end{cases}$.
 \end{definition*}
 For a historical  overview the reader should refer to \cite{Fuchs_2017}. 

The existence of SIC-POVMs in all dimensions was first conjectured by Zauner \cite{Zauner,zauner1999} in his PhD thesis, although under a different name and again
conjectured under the name we know now in \cite{Renes_2004}. There are many equivalent problems, such as the minimization of a particular frame potential 
\cite{Kwapisz_2019} and showing that the entanglement breaking rank of a particular channel is $d^2$ \cite{Pandey_2020}. For certain dimensions, a construction of SIC-POVMs is implied by the existence of Stark units from a ray class field extension of a real quadratic field  \cite{kopp2018sicpovmsstarkconjectures}.

\vspace{6pt}
Prop.\ref{prop:SIC} is a reformulation of Thm.4 in \cite{Scott_2006}, where SIC-POVMs are characterized as minimal two-designs. 
As a reminder, a 2-design of size $N$ in $\C^d$ is given by $N$ unit vectors $\{\vec{\psi_i}\}_{i=1}^N$ in $\C^d$  such that the following holds:
\[\frac{1}{N}\suml{i=1}{N}\op{\psi_i}{\psi_i}\otimes\op{\psi_i}{\psi_i}=\frac{2}{d(d+1)}P_{sym}.\]
A 2-design is called \emph{minimal} if $N=d^2$.
Our reformulation of Thm.4 in \cite{Scott_2006} uses two small insights:
\begin{enumerate}[(\roman*)]
    \item The constraint $\tr{{\rm tr}_1[P_i]^2}=\tr{P_i}^2$, means that ${\rm tr}_1[P_i]^2$ has rank one so that $P_i$ is a product.
    \item Each $P_i$ has to be supported in the symmetric subspace. Combined with (i) this means that $P_i=\op{\psi_i}{\psi_i}\otimes\op{\psi_i}{\psi_i}$, which is the form used in Thm.4 in \cite{Scott_2006}. 
\end{enumerate}

In this way, Thm.4 in \cite{Scott_2006} becomes our proposition:
\begin{proposition*}[{\bf \ref{prop:SIC}} SIC-POVMs] Let $P_{sym}$ be the hermitian projector onto the symmetric subspace of $\C^d\otimes\C^d$. There exists a SIC-POVM in $\C^d$ iff there are positive rank-one operators $P_1,\ldots,P_{d^2}$ s.t. $\frac{2d}{d+1}P_{sym}=\sum_{i} P_i$ and for all $i$: $\tr{{\rm tr}_1[P_i]^2}=(\tr{P_i})^2$.  
\end{proposition*}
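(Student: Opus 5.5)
We prove the two directions of the equivalence separately, in both cases passing through the characterization of SIC-POVMs as minimal complex projective 2-designs (Thm.~4 in \cite{Scott_2006}). This theorem says that a set of $d^2$ unit vectors $\{\psi_i\}$ is a SIC-POVM iff
\[\tfrac{1}{d^2}\sum_i \op{\psi_i}{\psi_i}^{\otimes 2}=\tfrac{2}{d(d+1)}P_{sym}.\]
Multiplying by $d^2$ gives exactly $\sum_i P_i=\tfrac{2d}{d+1}P_{sym}$ with $P_i:=\op{\psi_i}{\psi_i}^{\otimes 2}$. So the task is to show that the operator conditions in Prop.~\ref{prop:SIC} force $P_i$ to be of this product form with normalized $\psi_i$.

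\emph{($\Rightarrow$)} Given a SIC-POVM, set $P_i:=\op{\psi_i}{\psi_i}\otimes\op{\psi_i}{\psi_i}$. Each $P_i$ is positive and of rank one. By the 2-design property, $\sum_i P_i=\tfrac{2d}{d+1}P_{sym}$. Moreover ${\rm tr}_1[P_i]=\op{\psi_i}{\psi_i}$, so $\tr{{\rm tr}_1[P_i]^2}=1=\tr{P_i}^2$.

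\emph{($\Leftarrow$)} Write $P_i=\op{\Phi_i}{\Phi_i}$ with $\Phi_i\in\C^d\otimes\C^d$. We then proceed in three steps.

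First, each $\Phi_i$ lies in the symmetric subspace. Since $P_i\le\sum_j P_j\propto P_{sym}$, we have ${\rm supp}(P_i)\subseteq{\rm supp}(P_{sym})$.

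Second, each $\Phi_i$ is a product vector. The reduced operator ${\rm tr}_1[P_i]$ has eigenvalues equal to the squared Schmidt coefficients $s_k$ of $\Phi_i$. Hence $\tr{{\rm tr}_1[P_i]^2}=\sum_k s_k^2$, while $\tr{P_i}^2=(\sum_k s_k)^2$. Equality with nonnegative $s_k$ holds iff at most one $s_k$ is nonzero, so $\Phi_i=a\otimes b$.

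Third, a symmetric product vector satisfies $a\otimes b=b\otimes a$, which forces $b\propto a$. Hence $P_i=c_i\op{\psi_i}{\psi_i}^{\otimes 2}$ with $\|\psi_i\|=1$ and $c_i\ge 0$.

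The main obstacle is the normalization: we must show $c_i=1$ for all $i$, so that we have an honest minimal 2-design. We would first take the partial trace of $\sum_i P_i=\tfrac{2d}{d+1}P_{sym}$. Using ${\rm tr}_1[P_{sym}]=\tfrac{d+1}{2}\1$, this gives $\sum_i c_i\op{\psi_i}{\psi_i}=d\,\1$, and taking the trace yields $\sum_i c_i=d^2$. Next, we pair both sides of $\sum_i P_i=\tfrac{2d}{d+1}P_{sym}$ with the swap operator and use $\tr{F\,\op{\psi}{\psi}^{\otimes 2}}=1$. This only reproduces $\sum_i c_i=d^2$, so it does not separate the $c_i$.

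To get the individual weights, we would instead apply Cauchy--Schwarz to the frame potential. Set $G_{ij}:=|\langle\psi_i,\psi_j\rangle|^2$. Evaluating $\tr{(\sum_i P_i)^2}$ on both sides gives $\sum_{i,j}c_ic_jG_{ij}^2=\tfrac{4d^2}{(d+1)^2}\tr{P_{sym}}=\tfrac{2d^3}{d+1}$, while the lower bound comes from the diagonal terms $G_{ii}=1$.

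If the frame-potential argument cannot pin down the $c_i$, we fall back on the weighted version of Thm.~4 in \cite{Scott_2006}. It states that a weighted 2-design with $d^2$ points must have all weights equal, since $d^2$ is the minimal cardinality and the points must then be equiangular. This yields $c_i=1$, and the proposition follows.
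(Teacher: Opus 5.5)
Your proof is correct and follows the same route as the paper: $\tr{{\rm tr}_1[P_i]^2}=\tr{P_i}^2$ forces $P_i$ to be a product, $P_i\le\frac{2d}{d+1}P_{sym}$ forces symmetric support, and the equivalence is then read off from Thm.~4 of \cite{Scott_2006}. The normalization of the weights $c_i$, which the paper leaves implicit, is correctly settled by the weighted form of that theorem (a weighted 2-design with $d^2$ points is a SIC-POVM with equal weights); the inconclusive swap-operator and diagonal-only frame-potential attempts can simply be dropped.
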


\section{Mutually unbiased bases (MUBs)}\label{sec:MUB}

Recall that two orthonormal bases $(u_i)_{i=1}^d$, $(v_j)_{j=1}^d$ of $\C^d$ are called \emph{mutually unbiased} if $|\langle u_i,v_j\rangle|^2=1/d$ for all $i,j$. It is known that at most $(d+1)$ MUBs can exist in dimension $d$ and that this bound is tight whenever $d$ is a prime or prime power---the three Pauli-bases in $\C^2$ being the best known example. In other dimensions, it is unknown whether this bound is saturated. For a review of the vast literature on the problem, including fourteen equivalent reformulations thereof, see \cite{McNulty2026mutuallyunbiased}.

In this subsection, we aim to prove Prop.\ref{prop:MUB}---yet another equivalent reformulation. 
To this end, we need some preliminary definitions and Lemmas. With $|\Omega\rangle:=\sum_{i=1}^d |ii\rangle\in\C^d\otimes\C^d$, $\omega:=|\Omega\rangle\langle\Omega|/d$ and $R_\1:=-w+\sum_{i=1}^d |ii\rangle\langle ii|$, we define a Hermitian rank-$(d-1)$ projector
\begin{equation}
    R_U:=(U\otimes \overline{U})R_\1 (U\otimes \overline{U})^*
\end{equation} for any unitary $U\in U(d)$. Note that $|\Omega\rangle$ is in the kernel of every $R_U$. If we think of an orthonormal basis as the column vectors of a unitary, the MUB property can be expressed in terms of the $R_U$'s as follows:
\begin{lemma}[Orthogonal projections vs. MUBs]\label{lem:OrthopMUBs}\ 
    \begin{enumerate}
        \item $U,V\in U(d)$ are mutually unbiased iff $R_U R_V=0$.
        \item A set of $(d+1)$ unitaries  $U_1,\ldots,U_{d+1}\in U(d)$ is MUB iff 
        \begin{equation}\label{eq:RUisum}
            \sum_{i=1}^{d+1} R_{U_i}=\1-\omega.
        \end{equation}
    \end{enumerate}
\end{lemma}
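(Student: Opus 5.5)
Both parts reduce to computing the overlaps of the rank-one projectors making up $R_U$ and $R_V$, and then comparing traces.

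\emph{Part (1).} Write $u_i:=Ue_i$ for the columns of $U$. Since $(U\otimes\overline{U})|ii\rangle=|u_i\rangle\otimes|\overline{u_i}\rangle$ and $(U\otimes\overline{U})|\Omega\rangle=|\Omega\rangle$, we get
\[
R_U=\sum_i |u_i\overline{u_i}\rangle\langle u_i\overline{u_i}|-\omega .
\]
Each $|u_i\overline{u_i}\rangle$ has overlap $1/\sqrt d$ with the normalized vector $|\Omega\rangle/\sqrt d$. Hence $R_U$ is the projector onto the orthogonal complement of $\Omega$ inside $\mathcal{S}_U:={\rm span}\{|u_i\overline{u_i}\rangle\}_i$, and this complement has dimension $d-1$.

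Now take $v_j$ for the columns of $V$. The basic overlap is
\[
\langle u_i\overline{u_i}|v_j\overline{v_j}\rangle=|\langle u_i,v_j\rangle|^2=:M_{ij},
\]
where $M$ is doubly stochastic. Because $\omega$ commutes with both $R_U$ and $R_V$ and is annihilated by both, $R_UR_V=0$ is equivalent to $\mathcal{S}_U\ominus\Omega\perp\mathcal{S}_V\ominus\Omega$. Equivalently, $\langle x,y\rangle=\langle x,\Omega\rangle\langle\Omega,y\rangle/d$ for all $x\in\mathcal{S}_U$ and $y\in\mathcal{S}_V$. On the basis vectors this reads $M_{ij}=1/d$ for all $i,j$, which is exactly mutual unbiasedness. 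For the converse, assume $M_{ij}=1/d$. Then the Gram entries agree with those of the rank-one form through $\Omega$, so $R_UR_V=0$. A cleaner route is to compute
\[
\tr{R_UR_V}=\sum_{ij}M_{ij}^2-1 .
\]
This is $\geq 0$ with equality iff $M\equiv 1/d$, and for positive operators $\tr{R_UR_V}=0$ iff $R_UR_V=0$.

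\emph{Part (2).} Suppose first that the $U_i$ are MUB. By (1) the $R_{U_i}$ are mutually orthogonal projectors, so their sum is a projector of rank $(d+1)(d-1)=d^2-1$ supported in $\Omega^\perp$. It must therefore equal $\1-\omega$. Conversely, suppose $\sum_i R_{U_i}=\1-\omega$. Squaring, or just taking the trace of the product with itself, gives
\[
\sum_{i\neq j}\tr{R_{U_i}R_{U_j}}=\tr{(\1-\omega)^2}-\sum_i\tr{R_{U_i}}=(d^2-1)-(d+1)(d-1)=0 .
\]
Each term is nonnegative, so each vanishes. Hence $R_{U_i}R_{U_j}=0$ for $i\neq j$, and (1) gives that the $U_i$ are pairwise mutually unbiased.

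The only step that needs care is the trace identity $\tr{R_UR_V}=\sum_{ij}M_{ij}^2-1$. It follows by expanding the product using $\langle\Omega|u_i\overline{u_i}\rangle=1$ and $\tr{\omega^2}=1$. The bound $\sum_{ij}M_{ij}^2\geq 1$ with equality iff $M\equiv 1/d$ is Cauchy--Schwarz applied to $\sum_{ij}M_{ij}=d$ over the $d^2$ entries of $M$.
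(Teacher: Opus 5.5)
Your proof is correct and follows essentially the paper's argument: for (1) the paper also uses that ${\rm supp}(R_U)$ is spanned by the vectors $u_i\otimes\overline{u}_i-\Omega/d$, whose overlaps with the corresponding vectors for $V$ are $|\langle u_i,v_j\rangle|^2-\tfrac1d$, and for (2) it uses the same rank count $(d+1)(d-1)=d^2-1$. Your trace identities $\tr{R_UR_V}=\sum_{ij}M_{ij}^2-1$ and $\sum_{i\neq j}\tr{R_{U_i}R_{U_j}}=0$ are a minor variation: they spell out the step ``a sum of projectors that equals a projector forces the summands to be mutually orthogonal,'' which the paper states without proof.
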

\begin{proof}
    \emph{(1)} Denoting by $u_i, v_j$ the column vectors of $U$ and $V$, respectively, the assertion follows from realizing  that the support of $R_U$ is spanned by $(u_i\otimes\overline{u}_i-\Omega/d)_{i=1}^d$ (similarly for $R_V$) and that $$\langle u_i\otimes\overline{u}_i-\Omega/d,v_j\otimes\overline{v}_j-\Omega/d\rangle = |\langle u_i,v_j\rangle|^2 - \frac1d .$$
    \emph{(2)} If the $U_i$'s are MUBs, then \emph{(1)} implies that their sum is a Hermitian projector of rank $(d+1)(d-1)=d^2-1$. The kernel must contain $|\Omega\rangle$ and since it is one-dimensional, Eq.(\ref{eq:RUisum}) follows.

    The converse follows from the fact that the l.h.s. of Eq.(\ref{eq:RUisum}) is a sum of $(d+1)$ Hermitian projectors, each of rank $(d-1)$, while the r.h.s. is a Hermitian projector of rank $(d^2-1)$. This requires the projectors in the sum to be  mutually orthogonal, which in turn implies the MUB property by \emph{(1)}.
\end{proof}
The next Lemma recovers the structure of $R_U$ from a constrained projection.
\begin{lemma}\label{lem:XRU}
    If $X\in{\rm Herm}(\C^d\otimes\C^d)$ is a Hermitian projector of rank $d$ that satisfies ${\rm tr}_1[X]={\rm tr}_2[X]=\1$, $X|\Omega\rangle=|\Omega\rangle$ and $X^{T_1}\geq 0$, then there is an orthonormal basis $(a_i)_{i=1}^d$ of $\C^d$ s.t.
    \begin{equation}
        X=\sum_{i=1}^d |a_i\otimes\overline{a}_i\rangle\langle a_i\otimes\overline{a}_i|.
    \end{equation}
    Consequently, there is a unitary $U\in U(d)$ s.t. $X-\omega=R_U$.
\end{lemma}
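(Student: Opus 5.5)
The plan is to move the whole problem to the partially transposed operator $Y:=X^{T_1}$, which is positive by assumption, and show that $Y$ is a sum of $d$ orthogonal product projectors $|b_i\otimes b_i\rangle\langle b_i\otimes b_i|$. Undoing the partial transpose then gives the claim with $a_i:=\overline{b}_i$, since $(|a\rangle\langle a|\otimes|\overline a\rangle\langle\overline a|)^{T_1}=|\overline a\rangle\langle\overline a|\otimes|\overline a\rangle\langle\overline a|$. The final sentence follows directly: subtracting $\omega$ and choosing $U$ with columns $a_i$ gives $X-\omega=(U\otimes\overline U)(\sum_i|ii\rangle\langle ii|-\omega)(U\otimes\overline U)^*=R_U$, because $\omega$ is $U\otimes\overline U$-invariant.

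\emph{Step 1 (basic data of $Y$).} I would first record the following facts.
\begin{itemize}
\item[$\circ$] $\tr{Y}=\tr{X}=d$, and $\tr{Y^2}=\tr{X^2}=d$, since partial transposition preserves the Hilbert--Schmidt norm.
\item[$\circ$] ${\rm tr}_1[Y]={\rm tr}_1[X]=\1$ and ${\rm tr}_2[Y]=({\rm tr}_2[X])^T=\1$.
\item[$\circ$] Since $(|\Omega\rangle\langle\Omega|)^{T_1}=F$, the swap operator, we get $\tr{YF}=\langle\Omega|X|\Omega\rangle=d=\tr{Y}$. As $F$ has eigenvalues $\pm1$ and $Y\geq0$, this forces $FY=Y$, so $Y$ is supported on the symmetric subspace.
\end{itemize}

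\emph{Step 2 ($Y$ is a projector).} $Y$ is PPT, since $Y^{T_1}=X\geq0$. The reduction map $A\mapsto\tr{A}\1-A$ is decomposable, so PPT operators satisfy the reduction criterion, which here gives $Y\leq\1\otimes{\rm tr}_1[Y]=\1$. Combined with $\tr{Y}=\tr{Y^2}=d$, this shows that all nonzero eigenvalues of $Y$ equal $1$. Hence $Y$ is a rank-$d$ projector onto a subspace $\cK$ of the symmetric subspace with ${\rm tr}_1[Y]=\1$, and its partial transpose $X$ is again a projector.

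\emph{Step 3 (product structure; main obstacle).} It remains to show that $\cK$ is spanned by $b_i\otimes b_i$ for some orthonormal basis $(b_i)$. Symmetry and the marginal condition alone do not obviously force this, so the projector property of $X=Y^{T_1}$ has to be used essentially. My first attempt would be to exploit $X^2=X$ and $Y^2=Y$ simultaneously. Writing $Y=\sum_{k}|\psi_k\rangle\langle\psi_k|$ with $\psi_k=(\1\otimes M_k)|\Omega\rangle$ and $M_k$ symmetric matrices, the conditions become
\begin{itemize}
\item[$\circ$] $\tr{M_k^*M_l}=\delta_{kl}$,
\item[$\circ$] $\sum_k M_kM_k^*=\1$,
\item[$\circ$] $X^2=X$ for $X=\sum_k(\overline{M_k}\otimes\1)F(M_k^*\otimes\1)$ up to transposes.
\end{itemize}
I would then compute $\tr{X^2}$ (equivalently $\tr{(Y^{T_1})^2}$ together with $\tr{X^3}$, or $\tr{XY}$) in terms of the $M_k$. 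The aim is a sum of nonnegative terms, such as $\sum_{k,l}\|[M_k,M_l^*]\|_2^2$, which must vanish. That would make the $M_k$ commuting normal matrices, hence simultaneously diagonalizable by a single unitary $W$, i.e.\ $M_k=W D_k W^T$ (recall the $M_k$ are symmetric) with $D_k$ diagonal. With the two normalization conditions, the span of the $D_k$ is then all diagonal matrices, so $\cK={\rm span}\{b_i\otimes b_i\}$ with $b_i=We_i$. Finding the right nonnegative identity in this step is where I expect the real work to lie.
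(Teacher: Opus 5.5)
Your Steps 1 and 2 are correct. The identity $\tr{YF}=\langle\Omega|X|\Omega\rangle=d=\tr{Y}$ does force $Y$ onto the symmetric subspace. The reduction map is a completely positive map composed with the transposition, so the PPT operator $Y$ obeys $Y\leq\1\otimes{\rm tr}_1[Y]=\1$, and then $\tr{Y}=\tr{Y^2}=d$ makes $Y$ a rank-$d$ projector.

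However, Step 3 carries the entire content of the lemma (the product structure of the range), and you have not carried it out. The route you sketch for it also cannot succeed as stated, for two reasons.

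\emph{The trace identity is vacuous.} $\tr{X^2}=\sum_{k,l}|\tr{M_k^*M_l}|^2=d$ holds identically, so it gives no information. Moreover, after Steps 1--2 the condition $X^2=X$ is equivalent to $X\geq 0$, because the reduction criterion also gives $X\leq\1$. What remains to exploit is therefore a genuine positivity constraint, not an algebraic identity.

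\emph{Your intended endpoint is false.} Take $d=2$ and $b_{1,2}=(1,\pm i)/\sqrt2$. Then $X=\sum_i|\overline{b}_i\otimes b_i\rangle\langle\overline{b}_i\otimes b_i|$ satisfies all hypotheses. The span of your $M_k$ contains $M=b_1b_1^T$, which satisfies $M^2=(b_1^Tb_1)M=0$ with $M\neq 0$, so $M$ is not normal. Since $[A,A^*]=\sum_{k,l}\alpha_k\overline{\alpha_l}[M_k,M_l^*]$ for $A=\sum_k\alpha_kM_k$, no identity can force all $[M_k,M_l^*]$ to vanish. Relatedly, commuting normal matrices are diagonalized as $WD_kW^*$, not $WD_kW^T$. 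The structure you actually need, $M_k=WD_kW^T$, is a simultaneous Takagi factorization; it corresponds to the products $M_k\overline{M_l}$ forming a commuting family of normal matrices.

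The missing ingredient is the low-rank PPT separability theorem of Horodecki, Lewenstein, Vidal and Cirac, which is what the paper uses. It says that a PPT operator on $\C^d\otimes\C^d$ with rank $d$ and full local ranks is separable, with a decomposition into exactly $d$ product vectors. The paper applies it directly to $X$ to get $X=\sum_{i=1}^d|a_i\otimes b_i\rangle\langle a_i\otimes b_i|$, and the rest is elementary:
\begin{enumerate}
\item The Gram matrix $G$ of these linearly independent vectors satisfies $G^3=G^2$, hence $G=\1$.
\item The marginal conditions force the $a_i$ and the $b_i$ to be orthogonal, so after rescaling both are orthonormal bases.
\item $X|\Omega\rangle=|\Omega\rangle$ yields $b_j=\overline{a}_j$ up to a phase.
\end{enumerate}

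The same theorem would also complete your own route. Applied to $Y$ (PPT, rank $d$, marginals $\1$), it gives $Y=\sum_i|c_i\otimes e_i\rangle\langle c_i\otimes e_i|$ with each $c_i\otimes e_i$ in the range of $Y$, hence in the symmetric subspace, which forces $e_i\propto c_i$. The projector property and ${\rm tr}_1[Y]=\1$ then make the suitably rescaled $c_i$ an orthonormal basis, and $a_i:=\overline{c}_i$ finishes the proof. Without some such separability input, Step 3 remains open and the proposal does not prove the lemma.
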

\begin{proof}
    Following \cite{PPT}, ${\rm rank}(X)=d$ together with $X^{T_1}\geq 0$ implies the existence of a separable decomposition of length $d$, i.e. 
    \begin{equation}
        X=\sum_{i=1}^d |a_i\otimes b_i\rangle\langle a_i\otimes b_i|.
    \end{equation}
    In order for $X$ to have rank $d$, the vectors $|v_i\rangle:=|a_i\otimes b_i\rangle$ have to be linearly independent. Defining $V|i\rangle:=|v_i\rangle$, we get $X= VV^*$ and $G:=V^*V\in GL(d)$. Then $X^2=X$ implies $G^3=G^2$ and therefore $G^2=G=\1$. So the $v_i$'s are orthonormal.  

    The partial trace constraint $\1={\rm tr}_1[X]=\sum_i \|a_i\|^2 |b_i\rangle\langle b_i|$ implies orthogonality of the $b_i$'s and similarly, $\1={\rm tr}_2[X]$ implies orthogonality of the $a_i$'s. As $1=\|v_i\|=\|a_i\|\cdot \|b_i\|$ we can choose $(a_i)_{i=1}^d$ and $(b_i)_{i=1}^d$ to be orthonormal bases.

    Finally, we utilize the constraint $X|\Omega\rangle=|\Omega\rangle$, which results in
    $$\sum_{i=1}^d \langle a_i,\overline{b}_i\rangle |a_i\otimes b_i\rangle =|\Omega\rangle.$$
    Acting from the left with $\1\otimes\langle b_j|$ on this equation leads to $|\overline{b}_j\rangle=\langle a_j,\overline{b}_j\rangle |a_j\rangle$. So $\langle a_j,\overline{b}_j\rangle$ has modulus one and can thus be absorbed in the choice of the basis so that $b_j=\overline{a}_j$.
\end{proof}
Now we are equipped to prove Prop.\ref{prop:MUB}, which we restate for convenience:
\begin{proposition*}[MUBs] Let $\omega$ be the projector onto a maximally entangled state in $\C^d\otimes\C^d$ and $P:=\1-\omega$.
    There are $(d+1)$ MUBs in $\C^d$ iff there are positive operators $(P_i)_{i=1}^{d+1}$ of rank $(d-1)$ s.t. $P=\sum_{i} P_i$ with partial traces ${\rm tr}_1[P_i]={\rm tr}_2[P_i]\propto \1$ and partial transposes s.t. $\|(P_i+\omega)^{T_1}\|_1=d$. 
\end{proposition*}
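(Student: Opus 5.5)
The plan is to reduce the statement to Lemma~\ref{lem:OrthopMUBs}(2) and Lemma~\ref{lem:XRU}, translating between the operators $P_i$ and the projectors $R_{U_i}$.

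\emph{($\Rightarrow$)} Given $(d+1)$ MUBs with unitaries $U_i$, set $P_i:=R_{U_i}$. By Lemma~\ref{lem:OrthopMUBs}(2) these sum to $\1-\omega=P$, and each has rank $d-1$. For the partial traces, write $R_{U_i}=X_i-\omega$ with $X_i=\sum_j|u_j\otimes\overline u_j\rangle\langle u_j\otimes\overline u_j|$. Both partial traces of $X_i$ are $\1$, and both partial traces of $\omega$ are $\1/d$, so ${\rm tr}_\alpha[P_i]=(1-1/d)\1$. Finally, $(P_i+\omega)^{T_1}=X_i^{T_1}$ is again a sum of $d$ rank-one product projectors, hence positive, and its trace norm equals its trace $d$.

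\emph{($\Leftarrow$)} This is the substantive direction. Given the $P_i$, first observe that $\sum_i\tr{P_i}=\tr{P}=d^2-1$ and $\sum_i\rank{P_i}=(d+1)(d-1)=\rank{P}$. The supports of the $P_i$ therefore span ${\rm supp}(P)$ exactly, and each $P_i$ is supported in $\Omega^\perp$. Next I want each $P_i$ to be a projector. Since $P$ is a projector and $\sum_i \rank{P_i}=\rank{P}$, the supports form a direct-sum decomposition of ${\rm supp}(P)$, and I expect to show they are mutually orthogonal projectors.

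This step is the main obstacle. The cleanest route is to write $P_i=\sqrt P\,Q_i\sqrt P$ (as in Lemma~\ref{lem:balancedequi}), so that $\sum Q_i=\1$ on ${\rm supp}(P)$ with ranks adding up. A sum of positive operators with additive ranks equalling the identity forces each $Q_i$ to be idempotent, hence an orthogonal projector; for instance, compress the identity $\sum_j Q_j=\1$ to ${\rm supp}(Q_i)$, where the other terms' ranges form a complement. Since $\sqrt P=P$ on its support, it follows that $P_i=Q_i$ is a projector.

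Now set $X_i:=P_i+\omega$. This is a rank-$d$ projector with $X_i|\Omega\rangle=|\Omega\rangle$, because $P_i\perp\Omega$. Its partial traces are proportional to $\1$ and its trace is $d$, so ${\rm tr}_\alpha[X_i]=\1$. For the partial transpose, $\tr{X_i^{T_1}}=\tr{X_i}=d$, so $\|X_i^{T_1}\|_1=d$ forces $X_i^{T_1}\geq0$, since the trace norm equals the trace only for positive operators. Lemma~\ref{lem:XRU} then yields unitaries $U_i$ with $P_i=R_{U_i}$. Finally, Lemma~\ref{lem:OrthopMUBs}(2), applied to $\sum_i R_{U_i}=\1-\omega$, gives that the $U_i$ form $(d+1)$ MUBs.
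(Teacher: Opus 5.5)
Your proposal is correct and follows the paper's proof. The forward direction takes $P_i=R_{U_i}$, and for the converse you show each $P_i$ is a projector, set $X_i=P_i+\omega$, and apply Lemma~\ref{lem:XRU} and then Lemma~\ref{lem:OrthopMUBs}(2).

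The only difference is how you get that each $P_i$ is a projector. The paper uses the trace identity you already wrote down: $0\le P_i\le P\le\1$ with rank $d-1$ gives $\tr{P_i}\le d-1$, and $\sum_i\tr{P_i}=d^2-1$ forces equality, so all nonzero eigenvalues equal $1$. Your rank-additivity/direct-sum argument is also valid. The detour through $Q_i$ is unnecessary, though, since $\sqrt{P}=P$ and you can argue with the $P_i$ directly.
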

\begin{proof}
    If there are $(d+1)$ MUBs, given in terms of unitaries $U_1,\ldots, U_{d+1}$ we can simply take $P_i=R_{U_i}$. These will satisfy Eq.(\ref{eq:RUisum}), and all the other required properties by construction (note that $\|(P_i+\omega)^{T_1}\|_1=d$ is nothing but $(P_i+\omega)^{T_1}\geq 0$).

    Conversely, assume that $P_1,\ldots,P_{d+1}$ exist as claimed. Then each $P_i$ has spectrum in $[0,1]$ (because $P$ has) and at most $(d-1)$ nonzero eigenvalues, so $\tr{P_i}\leq d-1$. However, $d^2-1=\tr{P}=\sum_{i=1}^{d+1} \tr{P_i}$, which means that necessarily $\tr{P_i}=(d-1)$ and each $P_i$ is a Hermitian projector. Then $X_i:=P_i+\omega$ satisfies the assumptions of Lemma \ref{lem:XRU}, which implies that there is a unitary $U_i\in U(d)$ s.t. $P_i=R_{U_i}$. Finally, Lemma \ref{lem:OrthopMUBs} \emph{(2)} completes the proof.
\end{proof}

\bibliographystyle{halpha}
\bibliography{PTI}{}\vspace*{15pt}

\end{document}